\newtheorem{thm}{Theorem}[section]
 \newtheorem{cor}[thm]{Corollary}
  \theoremstyle{definition}
 \newtheorem{defn}{Definition}[section]
 \theoremstyle{remark}
 \newtheorem{rem}{Remark}[section]
 \newtheorem{ex}[thm]{Example}
 \numberwithin{equation}{section}
 \let\emptyset\varnothing
\def\RR{\mathbb R} \def\CC{\mathbb C} \def\NN{\mathbb N}
\def\ZZ{\mathbb Z} \def\QQ{\mathbb Q}
\def\DD{\mathbb D}  \def\BB{\mathbb B}
\def\PP{\mathbb P} \def\EE{\mathbb E} 
\def\VV{\mathbb{V}}
\def\ii{\mathrm{i}} \def\dd{\mathrm{d}} \def\ee{\mathrm{e}}
\def\wt{\widetilde}  \def\pa{\partial}
\DeclareMathOperator{\im}{Im}
\DeclareMathOperator{\re}{Re}
\DeclareMathOperator{\Ln}{Ln}
\DeclareMathOperator{\mult}{mult}
\DeclareMathOperator{\diam}{diam}
\DeclareMathOperator{\Arg}{Arg}
\DeclareMathOperator{\Var}{Var}
\def\al{\alpha}  
\def\la{\lambda}\def\La{\Lambda}
\def\de{\delta} \def\De{\Delta}
\def\ep{\epsilon} 
 \def\Ga{\Gamma}
\def\om{\omega} \def\Om{\Omega}
\def\Si{\Sigma} \def\si{\sigma}
\def\Up{\Upsilon}
\def\N{\mathfrak{N}}
\def\B{\beta}
\def\V{V}
\def\K{\mathcal{K}}
\def\F{\mathcal{F}}
\def\AAA{\mathbb{A}}
\def\muren{\mathfrak{m}}
\def\Ad{\mathrm{Ad}}
\def\Kp{K}
\def\Bd{\partial}
\def\comp{\mathrm{comp}}
\def\id{\mathfrak{e}}
\begin{document}

\title{Asymptotics of random resonances generated by a point process of delta-interactions}
\author{}
\date{}
\maketitle

{\center  {\large 
Sergio Albeverio $^{\text{a}}$ and Illya M. Karabash $^{\text{b,c,d}}$ \\[6ex]
}}

\hspace{0.3\linewidth}\parbox{0.65\linewidth}{
\small \it
This paper is dedicated to the dear memory of Erik Balslev. 
Erik has done groundbreaking seminal work on the relations between Schrödinger operators, complex scaling, spectral theory, and number theory. The first named author has had the great pleasure to meet him in Princeton in 1970 and received much inspiration from him. Erik has been a very kind, deep,  and open person, and the very dear friend.}

\vspace{4ex}

{\small \noindent
$^{\text{a}}$  Institute for Applied Mathematics, Rheinische Friedrich-Wilhelms Universität Bonn,
and Hausdorff Center for Mathematics, Endenicher Allee 60,
D-53115 Bonn, Germany\\[1mm]
$^{\text{b}}$ Fakultät für Mathematik, TU Dortmund, Vogelpothsweg 87, 44227 Dortmund, Germany\\[1mm]
$^{\text{c}}$ Institute of Applied Mathematics and Mechanics of NAS of Ukraine, Dobrovolskogo st. 1, Slovyans'k 84100, Ukraine\\[1mm]
$^{\text{d}}$ Corresponding author: i.m.karabash@gmail.com\\[2mm]
E-mails: albeverio@iam.uni-bonn.de, i.m.karabash@gmail.com
}

\begin{abstract}
We introduce and study the following model for random resonances: we take a  collection of point interactions $\Up_j$ generated by a simple finite point process in $\RR^3$ and consider the resonances of associated random Schrödinger Hamiltonians 
$H_\Up = -\De + ``\sum \muren (\al) \de (x - \Up_j)``$. These resonances are zeroes of a random exponential polynomial, and so form a point process $\Si (H_\Up)$ in the complex plane  $\CC$.
We show that the counting function for the set of random resonances $\Si (H_\Up)$ in $\CC$-discs with growing radii possesses Weyl-type asymptotics almost surely for a uniform binomial process $\Up$, and obtain an explicit formula for the limiting distribution as $m \to \infty$ of the leading parameter of the asymptotic chain of `most narrow' resonances generated by a sequence of uniform binomial processes $\Up^m$ with $m$   points. We also pose a general question about limiting behavior of the point process formed by leading parameters of asymptotic sequences of resonances.   Our study leads to questions about metric characteristics for the combinatorial geometry of $m$ samples of a random point in the 3-D space and related statistics of extreme values. 
\end{abstract}

{ \small \noindent
MSC-classes: (Primary) 
82B44  
35B34, 
35P20, 
60G70; 
(Secondary)
60G55,  
35P25, 
35J10, 
60H25, 
47B80, 
81Q80 

\noindent
Keywords: Anderson-Poisson Hamiltonian, random Schrödinger operator, asymptotically narrow resonances, random scattering, distribution of scattering poles, zero-range interactions, point process and point interactions, limits of random asymptotic  structures
\tableofcontents
}

\normalsize

\section{Introduction}
\label{s:i}

This paper is written in Memory of Erik Balslev. Erik's pioneering works have been very influential in many areas of analysis, mathematical physics, and  analytical number theory. Some of related topics are closely connected with the present paper.  In fact, one of Erik's major contributions into mathematical physics is the clarification of the very concept of resonance \cite{B87,BS88,BS89,B89_I,B89_II}. Moreover, the technique of complex dilations that he developed in cooperation with Jean-Michel Combes \cite{BC71} has played an important role in making accessible tools of analytic perturbation theory in problems involving resonances and eigenvalues embedded in continuous spectra. The connection of spectral theory with problems of analytic number theory, that Erik discovered and masterly developed in a series of important papers \cite{BV98,BV01,BV06}, has inspired us to the study of the interplay between resonances, exponential polynomials, and point interactions.

In this paper we introduce a 3-D continuous model for random resonances using Hamiltonians of a generalized Schrödinger type involving point interactions. To make the Hamiltonian random, we assume that interactions are generated by a point processes with a suitable properties. The main goal of this paper is to introduce main notions and problems for related random resonances and consider some of their asymptotic properties on a relatively simple examples of binomial point processes. 

Another part of our research aimed on a more detailed study of the asymptotics of random resonances is now in preparation for publication.

While Schrödinger operators with random point interactions have been introduced in mathematical papers  and their self-adjoint spectra have been investigated (see  \cite{KM82,FHT88,AGHH12,KMN19} and references therein), it seems that the resonances for such models are not adequately mathematically studied. 

For other models of random resonances,  the mathematical theory has been attracting an increasing attention during recent years. It worth to mention the monograph \cite{S14} and the paper \cite{K16}. One of the problems suggested in the introduction to \cite{S14} as a promising direction of future research concerns the connection between Weyl asymptotics and asymptotics of random resonances. We would like to note that Section \ref{s:ad} of the present paper addresses a somewhat connected problem in the context of Schrödinger operator with random point interactions.
Namely, we prove that the Weyl-type asymptotics, which has been recently introduced for deterministic point interactions in \cite{LL17}, takes place almost surely (a.s.) for our stochastic example. 

From a more general perspective, random resonance effects were intensively studied in Physics (see the literature in \cite{K16}).
One of the first mentioning in the mathematical context of resonances of random Schrödinger operator known to us is in  the paper \cite{HS86}, where 
the question of estimation of the support of distribution of random resonances served as one of motivations for a resonance optimization problem (see also \cite{KLV17} for a more recent discussion of this interplay).

We shall present now in detail the model of random resonances that will be studied in this paper. Let $\Up$ be a  point process on $\RR^3$, let $(\Om,\F,\PP)$ be the underlying complete probability space, and let $\eta_{_\Up}$ be the random (counting) measure associated with $\Up$. Throughout the paper we assume that
\begin{itemize}
\item[(A0)] the point process $\Up$ is simple and finite. 
\end{itemize}
(A point process $\La$ on $\RR^d$ is said to be finite if it satisfies $\eta_{_\La} (\RR^3) < \infty $ almost surely; for the definition of simple process see below).

Any locally-finite point process on $\RR^d$ is proper. For the finite point process $\Up$, this means that there exist random variables $\nu:\Om \to  \NN_0 = \{0\} \cup \NN $ and 
$\Up_j : \Om \to \RR^3$, $j \in \NN$, such that $\Up $ can be considered as a finite collection of random points $\Up (\om)= \{ \Up_j (\om)\}_{j=1}^\nu$ for almost all (a.a.) $\om \in \Om$ (with respect to the measure $\PP$).

In particular,  $\eta_{_\Up} = \sum_{j=1}^{\# \Up} \de (\cdot -\Up_j (\om)) \dd x$ a.s., where $\de (x) \dd x$ is Dirac's delta measure and $\# S$ is the number of elements in a set (or in a multiset) $S$. 
When $j_2<j_1$ or $\nu =0$, it is assumed that $ \sum_{j=j_2}^{j_1} = 0$ or, resp., $\{ \Up_j \}_{j=1}^0 = \emptyset$. For basic definitions and facts concerning point processes, we refer to \cite{LP17} (see also \cite{K06,H12}).

We associate with $\Up$ a random Hamiltonian $H_\Up$ in the following way. Let $\al $ be a complex number, which is fixed throughout the paper.
For a deterministic set $Y =\{y_j\}_{j=1}^{\# Y} $ consisting of $\# Y \in \NN$ distinct points 
in $\RR^3$, the linear operator $H_Y$ in the complex Hilbert space $L^2_\CC (\RR^3)$ is the point interaction Hamiltonian 
corresponding to the formal differential expression 
\begin{equation}\label{e:H}
-\De u (x) + `` \sum_{j=1}^{\# Y} \muren (\al) \de (x - Y_j) u (x)  ``, \quad x \in \RR^3 , 
\end{equation}
and the `strength-type' parameter $\al$. Here the Dirac measure $\delta (\cdot - y_j)$ placed at a \emph{center} $y_j \in \RR^3$ of a \emph{point interaction} is symbolically multiplied on 
a normalization parameter $\muren (\al)$, see \cite{AH84,AGHH12,AK17} for details and 
Section \ref{s:PP} for the rigorous definition of this deterministic operator.
If $\#Y=0$, we assume that $Y = \varnothing$ and $H_Y = - \De$, where $\De= \sum_{j=1}^3 \pa^2_{x_j}$ is the Laplacian operator 
in $\RR^3$. 

The aim of this paper is to study resonances   of the random operator $H_\Up$.
For a deterministic Hamiltonian $H$, (continuation) \emph{resonances} $k$ are defined as poles of the resolvent $(H-k^2)^{-1}$ extended in a generalized sense through essential spectrum  into the  lower complex half-plane $\CC_- := \{ z \in \CC : \im z <0\}$ \cite{AH84,M84}.
The collection of all resonances $\Si (H) \subset \CC$ associated with $H$ (in short, resonances of $H$) is actually a \emph{multiset}, i.e., a set in which 
an element $e$ can be repeated a finite number of times.
We denote this number $\mult e$ and call it the multiplicity of $e$. 

An element $e$ is called multiple (simple) if  $\mult e \ge 2$ (resp., $\mult e = 1$). A multiset is said to be simple if every of its elements is simple. A point process is called simple if it is a.s. simple. 

The (algebraic) \emph{multiplicity of a resonance} $k$ can be  
defined as the multiplicity of the corresponding generalized pole of 
$(H-k^2)^{-1}$ (e.g., \cite{DZ19}), as the multiplicity of an eigenvalue obtained by a complex dilation \cite{BC71,SZ91,DZ19}, or  as the multiplicity of 
a zero of a certain analytic function built from the resolvent of $H$ 
and generating resonances as its zeros \cite{AH84}. In the present paper, we follow the latter approach  to the definition of multiplicity (see \cite{AH84,AGHH12,AK17} and Section \ref{s:PP}).

For $\om$ belonging to the event $\{ \om \in \Om: \eta_\Up (\RR^3) = \infty \}$ or to the event $\{\om \in \Om : \Up (\om)   \text{ is non-simple} \}$, which both have zero probability according to (A0), we do not define the Hamiltonian $H_{\Up (\om)}$. So, $\Si (H_\Up (\cdot ))$ is defined only almost surely.

\begin{rem} \label{r:def0} 
The reason for this convention is the following. While $H_Y$ can be defined in some cases when $\# Y = \infty$
and $Y$ is simple, e.g., in the case where $\inf_{j \neq j'} |Y_j - Y_{j'}|>0$ \cite{AGHH12} (see \cite{KMN19} for a relaxation of this condition in deterministic and stochastic settings), the notion of resonances in such settings is not understood well. It is natural, e.g., to expect that the resonances can be defined in a certain way for the case of a periodic lattice, but the classification of singularities of the resolvent and their physical meaning requires an additional study (cf. \cite{K97,K11}).  
We believe that in some of the cases with $Y$ having multiple points it is also possible to give some meaning to the operator $H_Y$ and, moreover, that this question is important for the optimization of resonances of $H_Y$ by a modification of the positions of centers (cf. \cite{K14,AK17}), but we are not aware about such studies.
\end{rem}


The random collection of resonances $\Si (H_\Up)$ is a proper point process in $\CC$ (generally, with multiplicities), see Theorem \ref{t:pp} below. In this paper, we are interested mainly not in the point process of resonance $\Si (H_\Up)$ itself, but in its random asymptotical behaviour near $\infty$ in the complex plane of the spectral parameter $k$. 

In Section \ref{s:ad}, we study this behaviour on the `rough level' of 
the asymptotics of counting function $\eta_{_{\Si (H_{_\Up})}} (\DD_R)$ as $\RR \to +\infty$,
and the (normalized) asymptotic density of resonances defined by
\[
\Ad (H_{\Up}) := \lim_{R \to \infty} \frac{\eta_{_{\Si (H_{_\Up})}} (\DD_R)}{R} \quad \text{ (see \cite{LL17,AK19,AK20})},
\]  
where $\DD_R = \{ z \in \CC  \ : \ |x| < R\}$.

The class $\Theta (m,\BB_r)$ of the binomial processes  of the sample size $m$ with a uniform sampling distribution in a ball $\BB_r := \{ x \in \RR^3 : |x| <r\}$ provides us with a simple  
example of a point process  $\Up$ with `good' diffuse properties (see, e.g., \cite{LP17}). We show in Theorem \ref{t:Ad=as} that 
the corresponding asymptotic density 
$
\Ad (H_{\Up}) 
$
is equal a.s. to the random variable $\frac{V (\Up)}{\pi}$, where 
\[
V (Y) := \max_{\si \in S_N} \sum_{j=1}^{N} | Y_{j} -  Y_{\si (j)}| 
\]
is the `size' of a deterministic collection $Y$ of $N$ points (see \cite{LL17}) and $S_N$ is the symmetric group of degree $N$ consisting of permutations $\si$.
(The above notation considers a permutation as a bijective map $\si:\{1,\dots,N\} \to \{1,\dots,N\}$). 

Slightly modifying the terminology of \cite{LL17}, we say that a  deterministic operator $H_Y$ has the Weyl-type asymptotics of resonances if $\Ad (H_{Y}) = \frac{V (\Up)}{\pi}$. So, we proved that Weyl-type asymptotics holds a.s. for the case $\Up \in \Theta (m,\BB_r) $, but our proof can be easily extended to a wide class of binomial processes. 

In Section \ref{s:AsPp}, we consider the fine structure of asymptotical  behaviour of the resonance point process $\Si (H_\Up)$ near $k=\infty$. This structure can be seen with the use of the logarithmic (asymptotic)  density function $\Ad^{\log} (h)$ \cite{AK19}, i.e., the normalized density  corresponding to semi-logarithmic strips 
$\{ - h \ln (|\re z|+1) \le \im z \}$, which is defined by  
\begin{align}
& \Ad^{\log} (h)  = \Ad^{\log} (\Up,h) := \lim_{R\to \infty} 
\frac{\N^{\log}_{H_{\Up (\om)}} (h,R)}{R} ,
\end{align}
via the associated \emph{logarithmic counting function} 
\begin{align}
& \N^{\log}_{H_{\Up (\om)}} (h,R)  :=  \# \{ k \in \Si (H_{\Up (\om)}) : 
- h \ln (|\re k|+1) \le \im k \ \text{ and } \ |k| \le R \} .
\end{align}
It is easy to see that  $\Ad (h) $ is a random variable for each $h \in \RR$.

The main result of Section \ref{s:AsPp} says that the sample paths of $\Ad^{\log} (h)$, $h\in \RR$, are almost surely piecewise constant functions with finite number of jumps and that after multiplication by $\pi h$ the associated random measure $ \dd \Ad^{\log} (h)$ becomes $\NN_0$-valued and defines a finite point process $\K := \{\K_j \}_{j=1}^{\# \K}$ on $\RR_+$. Thus, this point process describes the structure of  asymptotic sequences of random resonances going to $\infty$ in the $k$-plane.

In Section \ref{s:limits}, we take a sequence of uniform binomial processes $ \Up^{[m]} \in  \Theta (m,\BB_{r})$ in a fixed 3-D ball $\BB_r$ and consider the limiting behavior of the process $\K$ when the number of points $m$ grows to $\infty$. In particular, we obtain an explicit formula for the normalized limiting distribution of 
\[
\min \K (\Up^{[m]}) = \min_{1\le j \le \, \#\K(\Up^{[m]})}  \K_j (\Up^{[m]})
\]
 as $m \to \infty$. We also obtain probabilistic  estimates for the distributions of the asymptotic densities $\Ad (H_{\Up^{[m]}})$ for large $m$ using the Weyl-type asymptotics result of Section \ref{s:ad}.

\textbf{Notation.}
For a set $Y \in \RR^d$, its diameter is defined by
\[
\diam Y := \inf \{ \ell \ge 0  \ : \ \ell \ge |y_0 - y_1| \text{ for all } y_0,y_1 \in Y \} \text{ (we want $\diam \emptyset := 0$)}.
\]
The following standard sets are used: 
the set $\ZZ$ of integers, the set $\NN_0 = \NN \cup \{0\}$ of nonnegative integers,  half-lines $\RR_\pm = \{ x \in \RR : \pm x>0\}$, 
discs $\DD_R = \{ z \in \CC  \ : \ |x| < R\}$ in the complex plane, open half-planes $\CC_\pm = \{ z \in \CC : \pm \im z >0 \}$, and balls $\BB_r := \{ x \in \RR^3 : |x| <r\}$ in the 3-D real space. Sometimes, the complex plane $\CC$ is considered as the  Euclidean space $\RR^2$. For a subset $S$ of a normed space $U$, we denote its closure by  $\overline{S}$ and by $\Bd S$ its boundary. For $u_0 \in U$ and $z \in \CC$, we write 
$
z S  + u_0 := \{ zu + u_0 \, : \, u \in S \}
$. 
The class of random vectors with the standard normal distribution in $\RR^3$ is denoted by $N (\mathbf{0},I_{\RR^3})$.

The function $\Ln (\cdot)$ 
is the branch of the natural logarithm multi-function $\ln (\cdot)$ 
in $\CC \setminus (-\infty,0]$ fixed by  $\Ln 1 = \ii \Arg_0 1 = 0$. 
For $z \in \RR_-$, we put $\Ln z = \Ln |z| + \ii \pi $.

\section{Point process of random resonances\label{s:PP}}

Let $\{a_n\}_{n \in \NN_0}$ be a sequence of $\CC$-valued random variables on the complete probability space $(\Om,\F,\PP)$. Then the random power series $F (z) = \sum_{n=0}^{\infty} a_n z^n$ is said to be a random entire function (on $\CC$) if the radius of convergence of $F$ a.s. equals $\infty$. The random entire function $F$ is said to be a random polynomial if a.s. $a_n \neq 0$ only for a finite number of indices $n \in \NN_0$ (we refer to \cite{A66_Rps,A66_JFRAM,BS86} for basic facts of the theories of random analytic functions and random polynomials).

In this section, it will be shown that the multiset $\Si (H_\Up)$ of resonances of the random operator $H_\Up$ is a.s. the multiset of zeros of a random entire function. Then it is easy to see that $\Si (H_\Up)$ is a point process in $\CC$.

\begin{defn}
Consider a finite simple point process $\B = \{ \B_j \}_{j=1}^{\# \B}$ on $\CC$. Let $\{p_j \}_{j=1}^{\infty}$ be a sequence of random polynomials. Then any  random function of the form 
\[
\sum_{j=1}^{\#\B} e^{\B_j z} p_j (z) 
\]
is a.s. defined on the whole complex plane $\CC$ and is said to be a random exponential polynomial.
\end{defn}

It is easy to check that a random exponential polynomial is a random entire function (concerning the theory of deterministic exponential polynomials we refer to \cite{BC63,BG12}).

In the rest of the paper we suppose that the assumption (A0) holds. Then from the definition of resonances for deterministic point interaction Hamiltonians $H_Y$ \cite{AH84}, one sees that the multiset of random resonances $\Si (H_\Up)$ is a.s. the multiset of zeros of a specially constructed random exponential polynomial. 

For convenience of the reader, let us recall the construction of this exponential polynomial and the definition of the operator $H_Y$ associated with (\ref{e:H}) in the deterministic settings.
Let $Y=\{ Y_j \}_{j =1}^{\#Y}$ be a simple finite collection of points in $\RR^3$, i.e.,
the interaction centers $Y_j$ are distinct and their number $\# Y \in \NN \cap \{0\}$ is finite.
We assume that all point interactions are of the same `strength' which is described by a fixed parameter $\al \in \CC$. Various approaches to the definitions of the  operator $H_Y$ associated with (\ref{e:H}) were given, e.g., in 
\cite{AFH79,AGH82,AGHS83,AGHH12,AK17}.
It is a closed operator in the complex Hilbert space $L^2_\CC (\RR^3)$ 
 and it has a nonempty resolvent set which can be obtained from $\CC \setminus [0,+\infty)$ after  possible exclusion of a finite number of points \cite{AGHH12,AK17}.

The resolvent $(H_Y-z^2)^{-1}$ of $H_Y$ is defined in the classical sense 
on the set of $ z \in \CC_+ $ such that $z^2$ is not in the spectrum. Its integral kernel has the form 
\begin{gather} \label{e:Res}
(H_Y-z^2)^{-1} (x,x')  = G_z (x-x') + \sum_{j,j' = 1}^{\#Y} G_z (x-Y_j)
\left[ \Ga_Y \right]_{j,j'}^{-1} G_z (x' - Y_{j'} ) , 
\end{gather}
where $x,x' \in \RR^3 \setminus Y$ and $x \neq x'  $, see e.g. \cite{AGHH12,AK17}.
Here 
$
G_z (x-x') := \frac{e^{\ii z |x-x'|}}{4 \pi |x-x'|}
$ 
is the integral kernel associated with
the resolvent $(-\De - z^2)^{-1}$ of 
the kinetic energy Hamiltonian $-\De$;
$\left[ \Ga_Y \right]_{j,j'}^{-1}$ denotes the $j,j'$-element of the inverse to 
the matrix 
\begin{gather} \label{e:Ga}
\Ga_Y (z) = \left[ \left( \al - \tfrac{\ii z}{4 \pi} \right) \de_{jj'} 
- \wt G_z  (Y_j-Y_{j'})\right]_{j,j'=1}^{\# Y}, \text{ where }
\wt G_z (x) := \left\{ \begin{array}{rr} G_z (x), & 
x \neq 0 \\
0 , & 
x = 0  \end{array} \right.  .
\end{gather}

\begin{rem}
Actually, the Krein-type  formula (\ref{e:Res}) for 
the difference of the perturbed and unperturbed resolvents of operators $H_Y$ and $-\De$ can be used as a definition of $H_Y$ \cite{GHM80,AGHH12}.
For other equivalent definitions of $H_Y$ and for the renormalization procedure giving a meaning to $\muren (\al) $ in (\ref{e:H}) and to the `strength' parameter $\al$, we refer to 
\cite{AFH79,AGH82,AGHH12} in the case $\al \in \RR$, and to 
$\cite{AGHS83,AK17}$ in the case $\al \not \in \RR$. 
Note that, in the case $\al \in  \RR$, 
the operator $H_Y $ is self-adjoint in $L^2_\CC (\RR^3)$; and 
in the case $\al \in  \CC_- $, 
$H_Y$ is closed and maximal dissipative (in the sense of \cite{E12}, or in the sense that $\ii H_Y$ is maximal accretive).
\end{rem}

In the future, we will consider the finite collection $Y$ of distinct points as a finite simple multiset  in $\RR^3$.

If $\# Y \ge 1$, the set of resonances $\Si (H_Y)$ of the deterministic operator $H_Y$ 
is by definition the set of zeroes of 
the determinant 
$
\det \Ga_Y (\cdot)  ,
$
which we call the  characteristic determinant.
The multiplicity of a resonance $k$ will be understood as the multiplicity 
of a corresponding zero of 
$\det \Ga_Y $, which is an analytic function in $z$ \cite{AH84,AGHH12}.
Equipped with the multiplicity of any resonance, the set $\Si (H_Y)$ becomes a multiset.

\begin{rem}
While it is widely assumed among specialists that the various definitions of (algebraic) multiplicities of resonances coincide (e.g., in \cite{AH84} and in \cite{DZ19} for $H_Y$; we point out that $H_Y$ can be easily placed into the black box formalism of \cite{SZ91}), but we do not know to what extent this assumption is actually checked.
In particular, the answer to this question for the multiplicity of a zero resonance depends on the way how this resonance is handled (in some of the works, $0$ is excluded from the set of resonances by definition). 
\end{rem}

The characteristic determinant $\det \Ga_Y (\cdot) $ is obviously an exponential polynomial.

Let us consider in more detail the structure of its simple modification 
\begin{gather} \label{e:D}
D_Y (z) := (-4 \pi)^{\#Y} \det \Ga_Y (z) ,
\end{gather}
which we call the modified characteristic determinant and which is also an exponential polynomial. 
The multiplication by $(-4 \pi)^{\#Y}$ 
will simplify the appearance of the formulae below.

The statement that $D_Y $ is an exponential polynomial means that it has the form 
\begin{equation} \label{e:CanForm}
 \sum_{j=1}^{\#B} P_{B_j} (z) e^{\ii B_j z} ,
\end{equation}
where $B = \{ B_j \}_{j=1}^{\#B} $ is a finite sequence of complex numbers with $\#B \in \NN$ and $P_{B_j} (\cdot)$ are polynomials. 
We will say that a function is an exp-monomial if it has 
the form $e^{\ii B_0 z} p (z)$ with $B_0 \in \CC$ and a nontrivial polynomial $p$ 
(nontrivial in the sense that $p (\cdot) \not \equiv 0$). 

Expanding by the Leibniz formula the determinant $\det \Ga_Y (z)$,
one sees that 
$D_Y (z) $
is a sum of terms of the form
\begin{equation} \label{e:Dterms}
e^{\ii z  V_\si (Y)} p^{\si,Y} (z)
\end{equation}
taken over all permutations $\si$ in the symmetric group $S_N$ with $N= \#Y$.
Here the numbers $V_\si (Y)$  and the polynomials $p^{\si,Y} (\cdot)$ 
have the form 
\begin{gather} \label{e:al si}
\textstyle V_\si (Y) :=  \sum_{j =1}^{\# Y}  |y_j - y_{\si (j)}| , \quad 
p^{\si,Y} (\zeta) := \ep_\si C_1 (\si,Y) \prod\limits_{j : \si (j) = j  } (\ii z- 4 \pi \al) , 
\end{gather}
where $C_1 (\si,Y) := \prod_{j : \si (j) \neq j  } |y_j - y_{\si (j)}|^{-1} $
(in the case $\si = \id$, we put $K_1 (\id,Y) := 1$), $\ep_\si$ is the permutation sign (the Levi-Civita symbol), and $\id$ is the identity permutation.

For the particular case of $D_Y$ as given by (\ref{e:D}), the sequence $ \{ B_j \}_{j=1}^{\#B}$ and polynomials $P_{B_j} $ in 
(\ref{e:CanForm}) can always be chosen such that
\begin{itemize}
\item the  sequence $\{ B_j \}_{j=1}^{\#B} $ consists of increasing nonnegative numbers and \item each of the polynomials $P_{B_j} (\cdot) $ is nontrivial. 
\end{itemize}
In such a case, we say that (\ref{e:CanForm}) is the \emph{canonical form} and that $B_j$ are the \emph{frequencies} of the exponential polynomial $D_Y$. Similarly, $V_\si (Y)$ is called the frequency of the exp-monomial (\ref{e:Dterms}) (in the terminology of \cite{AK20},  $V_\si (Y)$ is the metric length of the directed graph associated with $Y$ and a permutation $\si$).
Note that  we always have  
\begin{gather} \label{e:b=0}
\text{$B_{1} = 0$ and $P_0 (z) = P_{B_1} (z)=  (\ii z - 4 \pi \al)^{\#Y} $}.
\end{gather}

Consider now the random operator $H_\Up$, where 
$\Up$ is assumed to be a finite point process on $\RR^3$ satisfying (A0).
In particular, $\Up$ is a proper point process (see e.g. \cite{LP17}). Consequently,
there exist an $ \NN_0$-valued random variable $\nu$ and $\RR^3$-valued random variables
$\Up_j $, $j \in \NN$, such that $\Up $ can be considered a.s. as a finite collection of random $\RR^3$-points, $\Up = \{ \Up_j \}_{j=1}^\nu$ a.s..

In the case $\# Y = 0$, one has $H_Y = - \De$ and $\Si (H_Y) = \emptyset$, and so we put
$D_Y (z) := 1$

With the above deterministic definitions, the  random exponential polynomial 
$D_{\Up} (z)$ is now a.s. defined and generates the random multiset of resonances $\Si (H_{\Up (\om)})$  as the multiset of its zeros. 

\begin{ex} 
Consider a mixed binomial process $\Up$ with mixing distribution $\VV$ given by $\VV (\{j\}) = 1/3$ for $j=0,1,2$ and the standard multivariate normal  distribution $N (\mathbf{0},I_{\RR^3})$ as the sampling distribution $\QQ$.
That is, $\Up = \{\Up_j\}_{j=1}^{\nu}$,  where $\nu$, $\Up_1$, and $\Up_2$ are mutually independent random variables, $\Up_1$ and $\Up_2$ are normally distributed  with the law $N (\mathbf{0},I_{\RR^3})$, and $\PP \{\nu = j\} = 1/3$, $j=0,1,2$. Let us introduce the random variables $\ell = |\Up_1   - \Up_2 | $ and $\diam \Up$
(so $\diam \Up$ is equal to $\ell$ when $\nu =2$, and to $0$ when $\nu \le 1$).
Note that $\ell / \sqrt{2}$ has a $\chi_3$-distribution as its law.

We observe that for $\# \Up = \nu = 0$, we have $\Si (H_\Up) = \emptyset$. If $\# \Up = 1$,  then 
$\det \Ga_\Up (z) =  \al - \tfrac{\ii z}{4 \pi} $ and so 
$\Si (H_\Up)$ consists of one point $(- \ii) 4 \pi \al $ of multiplicity 1.
Each of the two aforementioned events has probability $1/3$. 
In the event 
$\{ \# \Up =2, \ \ell = 0\}$ having zero probability, the Hamiltonian $H_{\Up (\om)}$ (and so $\Si (H_{\Up (\om)})$) is formally not defined.

Assume now that the event $\om \in \{\# \Up =2, \ \ell > 0\}$ (with probabilty $1/3$) takes place.
Then the multiset $\Si (H_\Up)$ consists of zeroes 
of the exponential polynomial $D_{\Up (\om)} (z) = (\ii z - 4 \pi \al)^2 - \left(e^{\ii z \ell (\om)}/ \ell (\om)\right)^2$ and is a countable sequence with an accumulation point at $\infty$.  A more detailed  description of the set of zeros of this transcedental function of $z$ in terms of $\al$ and $\ell (\om)$ can be found in \cite{AH84,AGHH12,AK19} (see also Section \ref{s:AsPp}). 
\end{ex}

The above considerations easily lead to the following result.

\begin{thm} \label{t:pp} 
Assume that the point process $\Up$ satisfies (A0). Then:
\item[(i)] The random multiset of resonances $\Si (H_\Up)$ is a proper point process on $\CC$. 
\item[(ii)] 
\begin{gather} \label{e:NSi} 
\# \Si (H_{\Up(\om)} ) = \left\{ 
\begin{array}{ll} 
\infty, & \text{for $\om$ such that $\# \Up (\om)  \ge 2$ and $\Up_1 (\om) \neq \Up_2 (\om)$};\\
1, & \text{for $\om$ such that $\# \Up (\om) = 1$}; \\
0, & \text{for $\om$ such that $\# \Up (\om) = 0$}; 
\end{array} \right.
\quad \text{ a.s. }
\end{gather}
\end{thm}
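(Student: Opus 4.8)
The plan is to reduce everything to facts about the random exponential polynomial $D_\Up$ already assembled in this section. First I would establish measurability. The map $\om \mapsto (\nu(\om), \Up_1(\om),\dots,\Up_\nu(\om))$ is measurable by properness of $\Up$ (assumption (A0)), and on the event $\{\nu = N\}$ the coefficients of $D_Y$ in its representation $(\ref{e:CanForm})$ depend, through $(\ref{e:al si})$, continuously (indeed polynomially and by the smooth function $y\mapsto|y|$) on the centers $Y=(\Up_1,\dots,\Up_N)$, at least off the null event where two centers coincide. Hence $D_\Up$ is a.s.\ a well-defined random entire function (in fact a random exponential polynomial in the sense of the Definition above), its coefficient sequence is a sequence of $\CC$-valued random variables, and the zero set $\Si(H_\Up)$, counted with multiplicity, is the zero multiset of this random entire function. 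That $\Si(H_\Up)$ is then a proper point process on $\CC$ follows from the standard fact that the zeros of a random entire function form a locally finite, hence proper, point process (with the enumeration of zeros being measurable); this gives part (i).

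For part (ii) I would condition on the value of $\nu = \#\Up$. On $\{\nu = 0\}$ we have by convention $H_\Up = -\De$ and $\Si(H_\Up) = \emptyset$, so $\#\Si = 0$. On $\{\nu = 1\}$, directly from $(\ref{e:Ga})$ the matrix $\Ga_\Up(z)$ is the scalar $\al - \tfrac{\ii z}{4\pi}$, so $D_\Up(z) = \ii z - 4\pi\al$ is a degree-one polynomial with the single zero $-4\pi\ii\al$; hence $\#\Si = 1$. On $\{\nu \ge 2,\ \Up_1 \neq \Up_2\}$ I would argue that $D_\Up$ is a genuine (nonconstant) exponential polynomial that is \emph{not} a polynomial: by $(\ref{e:b=0})$ its canonical form $(\ref{e:CanForm})$ always contains the term $P_0(z) = (\ii z - 4\pi\al)^{\#\Up}$ with frequency $B_1 = 0$, and since $\#\Up \ge 2$ and $\Up_1 \neq \Up_2$, the transposition $\tau=(1\,2)$ contributes via $(\ref{e:Dterms})$–$(\ref{e:al si})$ an exp-monomial with strictly positive frequency $V_\tau(\Up) = 2|\Up_1-\Up_2| > 0$ and nontrivial polynomial factor. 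After passing to the canonical form, the largest frequency $B_{\#B}$ is therefore strictly positive, so $D_\Up$ has at least two distinct frequencies; a classical theorem on exponential polynomials (see \cite{BC63,BG12}) then gives that $D_\Up$ has infinitely many zeros, accumulating only at $\infty$. Hence $\#\Si(H_\Up) = \infty$ on this event. The three events partition $\Om$ up to the null set $\{\nu\ge 2,\ \Up_1=\Up_2\}$ (null by simplicity of $\Up$), which yields $(\ref{e:NSi})$.

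The main technical point is the assertion that a nonconstant exponential polynomial with at least two distinct frequencies $B_1 < B_{\#B}$ has infinitely many zeros with $\infty$ as the only accumulation point; this is exactly where one uses that $D_\Up$ is transcendental rather than a polynomial, and it is the place where one must be careful that taking the canonical form genuinely leaves a second nonzero frequency — that is, that the contribution of $\tau=(1\,2)$ is not cancelled by contributions of other permutations with the same value of $V_\si(\Up)$. I would handle this by noting that among all permutations $\si \in S_N$, the one of smallest positive frequency is, for generic configurations, uniquely the nearest transposition; but to avoid any genericity hypothesis it is cleaner to invoke the maximal frequency: $B_{\#B} = V(\Up) = \max_{\si\in S_N}\sum_j|\Up_j - \Up_{\si(j)}|$, which is strictly positive whenever not all centers coincide (and in particular when $\Up_1 \neq \Up_2$), and one checks that the polynomial coefficient attached to the frequency-maximizing monomials does not vanish identically — this follows because the leading coefficients $C_1(\si,\Up)$ are all strictly positive and the signs $\ep_\si$ cannot force a cancellation at the top frequency for all $z$ simultaneously, since the relevant $p^{\si,\Up}$ are either nonzero constants or powers of $(\ii z - 4\pi\al)$ of the same degree. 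Everything else — measurability, the $\nu \le 1$ cases, the partition argument — is routine.
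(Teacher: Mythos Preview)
Your overall architecture matches the paper's proof: reduce (i) to the zeros of the random entire function $D_\Up$, and handle (ii) by splitting on $\nu=\#\Up$. The cases $\nu=0$ and $\nu=1$ are fine.

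The gap is in the case $\nu\ge 2$. You correctly identify that what is needed is the existence of at least one strictly positive frequency in the canonical form of $D_\Up$, but your attempted justification does not establish it. You pivot to the maximal frequency and assert that the coefficient at $V(\Up)$ does not vanish identically because ``the signs $\ep_\si$ cannot force a cancellation at the top frequency for all $z$''. This is not an argument: several maximizing permutations $\si$ can share the same number of fixed points, in which case their polynomials $p^{\si,\Up}$ are proportional and a sign-weighted sum of the positive constants $C_1(\si,\Up)$ can perfectly well vanish. In fact, whether the top frequency $V(\Up)$ survives to the canonical form is exactly the question of Weyl versus non-Weyl asymptotics addressed in Section~\ref{s:ad}; Theorem~\ref{t:Admeas} shows the non-Weyl set is contained in a proper analytic variety, not that it is empty. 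So your ``cleaner'' route targets a statement that the paper itself does \emph{not} claim holds for every simple $Y$ with $\#Y\ge 2$.

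The paper's proof avoids this by invoking the weaker (and sufficient) fact that the positive-frequency exp-monomials in the Leibniz expansion cannot \emph{all} cancel simultaneously, citing \cite{AK17,AK19}; from that one gets two distinct frequencies $B_1=0<B_{\#B}$ in the canonical form and then the classical zero-count for exponential polynomials yields $\#\Si=\infty$. To repair your argument you should either cite that non-cancellation result directly, or give an honest proof that \emph{some} positive frequency survives (for instance via the $2\times 2$ principal minor argument for $N=2$ together with an induction, or by the combinatorial arguments in the cited references), rather than asserting survival of the top frequency.
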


\begin{proof}
\emph{(i)} Obviously, $D_{\Up(\om)} (\cdot) \not \equiv 0$  and $D_{\Up(\om)}$ is a random entire function for a.a. $\om \in \Om$. This implies that the multiset of zeros of $D_\Up $ (and so the multiset $\Si (H_\Up)$) is a locally finite and proper point process in $\CC$ (see \cite[pp. 338-340]{S12}; the scheme of the proof in the particular case of random polynomials can be found in  \cite{A66_Rps,BS86}). 

\emph{(ii)} The case $\# \Up (\om) = 0$ follows from the fact that $\Si (H_{\emptyset}) = \Si (-\De) = \emptyset$. 
Assume that $\# \Up (\om) = 1$. Then a direct computation gives that $\Si (H_{\Up (\om)})$ consists of one point $(- \ii)4 \pi \al $ of multiplicity 1. Finally, assume that $2 \le \# \Up (\om)  < \infty $ and $\Up (\om)$ is simple. Then it is easy to see that, in the Leibniz expansion of $D_{\Up (\om)}$, the exp-monomials (\ref{e:Dterms}) with positive frequencies cannot completely cancel each other \cite{AK17,AK19}. So, additionally to the zero frequency (\ref{e:b=0}), the exponential polynomial $D_{\Up (\om)} (\cdot)$  has at least one positive frequency. The existence of two different frequencies implies the existence of infinite number of zeroes of $D_{\Up (\om)}$ \cite{BC63,LL17}. 
\end{proof}

\section{Asymptotic density and Weyl-type asymptotics with probability 1\label{s:ad}}

A substantial part of the mathematical studies of deterministic resonances 
is devoted to the asymptotics of the their counting function 
\begin{align} \label{e:NR}
& \N_{H_{Y}} (R)  =  \# \{ k \in \Si (H_Y) \ : \ |k| \le R \} .
\end{align}
In \cite{LL17}, 
the asymptotics 
$\N_{H_{Y}} (R) = \frac{C}{\pi} R + O (1)$
as $R\to \infty$
with a certain constant $C \ge 0$
was established for deterministic  Hamiltonians $H_{Y}$ with $\# Y = N \in \NN$ point interactions 
and it was proved that $C \le V (Y) := \max_{\si \in S_{\# Y}} \sum_{j=1}^{\#Y} | Y_{j} -  Y_{\si (j)}| $. The number $V (Y)$
 was called in \cite{LL17} the \emph{size of the set $Y$}. In the case $C=V(Y)$, it was said (slightly changing the wording in \cite{LL17}) that the \emph{Weyl-type asymptotics} of $\N_{H_{Y}} (R)$ takes place. 

We use in the present paper the terminology of \cite{AK19} and say that $\Ad (H_Y) := C/\pi$ is the \emph{total asymptotic density of resonances of $H_Y$}.
 This is motivated by the equality (see (\ref{e:NR}))
\begin{equation} \label{e:limsup}
\Ad (H_Y) = \lim_{R \to \infty} \frac{\N_{H_Y} (R)}{R} .
\end{equation}

By Theorem \ref{t:pp}, the total asymptotic density of random resonances $\Ad (H_\Up)$ is an $[0,+\infty]$-valued random variable for any point process $\Up$ satisfying (A0). Combining this with the deterministic result of  \cite{LL17} one sees that $\Ad (H_\Up)$ is a $[0,+\infty)$-valued  random variable and that 
\[
\Ad (H_\Up)  \le \frac{V (\Up)}{\pi} \text{ a.s.}
\]

The main result of this section says, roughly speaking, that for  point processes $\Up$ with good enough `diffuse'  sampling distributions the Weyl-type asymptotics for random resonances of $H_\Up$ holds with the probability 1. For the sake of simplicity, we prove this result only for 
the uniform binomial processes $\Theta (m,\BB_r)$ in $\RR^3$-balls (see Section \ref{s:i}).

\begin{thm} \label{t:Ad=as}
If $\Up \in  \Theta (m,\BB_r)$ with $m \in \NN$, then a.s. we have $ \Ad (H_\Up) = V (\Up)/\pi$.
\end{thm}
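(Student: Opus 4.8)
The plan is to identify the total asymptotic density with the largest frequency of the random exponential polynomial $D_\Up$, and then to show by a zero-measure argument that this largest frequency a.s.\ equals $V(\Up)$. Recall that for a deterministic $Y$ with $\#Y=N$ the modified characteristic determinant $D_Y$ has a canonical form $\sum_{j=1}^{\#B}P_{B_j}(z)e^{\ii B_j z}$ with $0=B_1<\dots<B_{\#B}$, and that, by the classical theory of the distribution of zeros of exponential polynomials (which underlies the results of \cite{LL17,BC63}), the number of zeros of $D_Y$ in $\DD_R$ is $\frac{B_{\#B}}{\pi}R+O(1)$; hence $\Ad(H_Y)=B_{\#B}/\pi$. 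Since every exp-monomial (\ref{e:Dterms}) in the Leibniz expansion of $D_Y$ has frequency $V_\si(Y)\le V(Y)$, one always has $B_{\#B}\le V(Y)$. Thus the theorem reduces to the claim that for $\Up\in\Theta(m,\BB_r)$ one has $B_{\#B}(\Up)=V(\Up)$ almost surely, i.e.\ that the coefficient of $e^{\ii V(\Up)z}$ in $D_\Up$ is not the zero polynomial.

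Grouping the Leibniz expansion of $D_Y$ by the values $V_\si(Y)$, that coefficient is the polynomial
\[
Q_Y(z) := \sum_{\si\in S_N:\, V_\si(Y)=V(Y)}\ep_\si\, C_1(\si,Y)\,(\ii z-4\pi\al)^{f(\si)},\qquad f(\si):=\#\{\,j:\si(j)=j\,\},
\]
and the sum is nonempty since $V(Y)$ is attained. The decisive deterministic observation is a further grouping by the \emph{edge multiset} $M_\si:=\{\{j,\si(j)\}:\si(j)\neq j\}$ (with a transposition contributing its edge twice): two permutations with the same edge multiset have the same frequency $V_\si(\,\cdot\,)$ as a function of $Y$, the same factor $C_1(\si,Y)$, the same fixed-point number $f(\si)$, and the same sign $\ep_\si$ (the latter because a $2$-regular multigraph decomposes uniquely into cycles, so $M_\si$ determines all cycle lengths); in particular $\si$ and any $\si'$ obtained from $\si$ by reversing some of its cycles contribute \emph{identical} exp-monomials, which reinforce and cannot cancel. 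Consequently, if for a given $Y$ of distinct points the maximum $V(Y)$ is realised by a single edge multiset $M^{\ast}$, then $Q_Y\equiv\bigl(\#\{\si:M_\si=M^{\ast}\}\bigr)\,\ep_{M^{\ast}}\,C_1(M^{\ast},Y)\,(\ii z-4\pi\al)^{f(M^{\ast})}$ is a nontrivial polynomial, since the integer factor is $\ge 1$, $\ep_{M^{\ast}}=\pm1$, and $C_1(M^{\ast},Y)$ is a finite product of positive reals.

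It remains to check that for $\Up\in\Theta(m,\BB_r)$ the following two conditions hold a.s.: (i) the sampled points are pairwise distinct; (ii) $V(\Up)$ is realised by exactly one edge multiset. Both follow because the joint law of the sampled points is absolutely continuous on $(\RR^3)^m$ (a symmetrisation of $(\mathrm{Unif}(\BB_r))^{\otimes m}$). For (i), pairwise coincidences form a Lebesgue-null set. For (ii), fix two distinct edge multisets $M_1\neq M_2$ on $\{1,\dots,m\}$; the map $y\mapsto V_{M_1}(y)-V_{M_2}(y)$ is real-analytic on the connected open set $\{y\in(\RR^3)^m:y_i\neq y_j\text{ for }i\neq j\}$ and is not identically zero there — because the functions $y\mapsto|y_i-y_j|$ are linearly independent, as one sees e.g.\ by letting a single point tend to infinity — so its zero set is Lebesgue-null and $\PP(V_{M_1}(\Up)=V_{M_2}(\Up))=0$; summing over the finitely many pairs gives (ii). Combining (i)--(ii) with the previous paragraph, $Q_\Up\not\equiv0$ a.s., hence $B_{\#B}(\Up)=V(\Up)$ and $\Ad(H_\Up)=V(\Up)/\pi$ a.s.

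The main obstacle is the non-cancellation of the top-frequency exp-monomials: one must be sure that the only coincidences $V_\si(\Up)=V_{\si'}(\Up)$ occurring with positive probability are the structural ones coming from reorienting cycles, and that these produce equal — hence additive, never destructive — contributions to $Q_\Up$. The bookkeeping of $\ep_\si$ and of the exponent $f(\si)$ under the edge-multiset grouping, together with the linear-independence input used in step (ii), is where the real work lies; a minor additional point is to make certain that the asymptotic formula of \cite{LL17} identifies its constant $C$ with the top frequency $B_{\#B}$ (equivalently, to invoke directly the classical zero-distribution theory for exponential polynomials), so that the problem genuinely becomes the computation of $B_{\#B}(\Up)$.
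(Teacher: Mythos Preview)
Your proof is correct and follows essentially the same route as the paper: group the Leibniz terms by edge-equivalence classes (your ``edge multisets''), observe that within a class the exp-monomials coincide term by term (same $V_\si$, same $C_1$, same fixed-point exponent, same sign), and then use a real-analytic zero-measure argument to ensure that almost surely the classes have pairwise distinct $V_\si$-values, so the top-frequency coefficient cannot cancel. The only difference is packaging: the paper proves Theorem~\ref{t:Ad=as} via the deterministic Theorem~\ref{t:Admeas}, importing from \cite{AK20} both the edge-equivalence notion and the non-cancellation on $\AAA_1$, whereas you supply these combinatorial facts directly (and sketch the linear independence of the functions $|y_i-y_j|$), making your argument self-contained.
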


We obtain this theorem in Section \ref{ss:proofAd} from the strengthened version of the deterministic result of \cite{AK20}. 

Namely,  for a deterministic $H_{Y}$ with $N \in \NN$ point interactions, it follows from \cite{AK20} that the Weyl-type asymptotics is generic in the sense described below. We consider $Y=\{Y_j\}_{j=1}^N$ as an $N$-tuple of $\RR^3$ vectors and identify it with a vector in the space $(\RR^3)^N=\RR^{3N}$ with the standard $\ell^2$-metric. The assumption that the interaction centers $Y_j$ are distinct means that $Y$ belongs to 
the family $\AAA$ of admissible $N$-tuples that is defined by 
\[
\AAA := \{ Y \in (\RR^3)^N \ : \ Y_j \neq Y_{j'} \ \text{ for } \ j \neq j' \} 
\]
and that is considered as an induced metric space and an induced measurable space with (3N-dimensional) Lebesgue measure.
Then \cite{AK20} implies that the set of $Y$ such that $\N_{H_Y}$ has a Weyl-type asymptotics is nowhere dense. This result is not enough to prove Theorem \ref{t:Ad=as} (because there exist nowehere dense subsets of $\RR^{3N}$ with positive Lebesgue measure).

Recall that  $S$ is a proper analytic subset of an open set $O \subset \RR^d$ if there exists a real analytic function $f$ on $O$ such that $f \not \equiv 0$ on $O$ and $S=\{ x \in O : f (x) = 0 \}$.

We prove in Section \ref{ss:proofAd} the following strengthening of the aforementioned result of \cite{AK20}.

\begin{thm} \label{t:Admeas}
Let the set $\AAA_0 \subset (\RR^3)^N$ consist of all admissible $N$-tuples $Y \in \AAA$ so that $\N_{H_Y}$ has non-Weyl-type asymptotics (i.e., so that $\Ad (H_Y) < V (Y)/\pi$). Then:
\item[(i)] $\AAA_0$ is a subset of a certain proper analytic subset of $\AAA$;
\item[(ii)] $\AAA_0$ is a set of zero Lebesgue measure. 
\end{thm}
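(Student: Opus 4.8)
The plan is to show that the condition $\Ad(H_Y) < V(Y)/\pi$ cuts out a set contained in the zero locus of a real-analytic function on $\AAA$, after which (ii) follows immediately from the standard fact that a proper analytic subset of a connected open subset of $\RR^d$ has Lebesgue measure zero (and $\AAA$, being the complement of finitely many diagonal subspaces in $\RR^{3N}$, is open and can be handled component-by-component, each component being connected). The core of the argument is therefore (i), and the natural route is through the explicit structure of the modified characteristic determinant $D_Y(z) = \sum_{j=1}^{\#B} P_{B_j}(z) e^{\ii B_j z}$ recalled in Section~\ref{s:PP}. The point is that, by the deterministic theory of exponential polynomials (as in \cite{BC63,LL17,AK20}), the total asymptotic density $\Ad(H_Y)$ equals $B_{\#B}/\pi$, i.e.\ $\pi^{-1}$ times the largest frequency actually appearing in the canonical form of $D_Y$; and the largest \emph{possible} frequency among the exp-monomials $e^{\ii z V_\si(Y)} p^{\si,Y}(z)$ of the Leibniz expansion is exactly $V(Y) = \max_{\si \in S_N} V_\si(Y)$. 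Hence $\Ad(H_Y) < V(Y)/\pi$ precisely when the leading-frequency exp-monomials cancel, i.e.\ when the total coefficient polynomial attached to the frequency $V(Y)$ in $D_Y$ vanishes identically.

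First I would make this cancellation condition into an analytic equation. Fix a component of $\AAA$ (equivalently, fix which of the finitely many permutations $\si$ attain the maximum $V(Y)$ — this is locally constant off a measure-zero analytic set where two distinct $V_\si(Y)$ coincide, and the latter coincidence set is itself a proper analytic subset since $Y \mapsto V_{\si}(Y) - V_{\si'}(Y)$ is real-analytic on $\AAA$ and not identically zero for $\si \ne \si'$ in general position). On such a piece, the coefficient of $e^{\ii z V(Y)}$ in $D_Y(z)$ is the polynomial $Q_Y(z) := \sum_{\si : V_\si(Y) = V(Y)} p^{\si,Y}(z)$, whose coefficients are explicit rational (hence real-analytic on $\AAA$) functions of $Y$ built from the distances $|Y_j - Y_{j'}|$ via \eqref{e:al si}. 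The set $\AAA_0$ is contained in the set where $Q_Y \equiv 0$, which is the common zero locus of the finitely many analytic coefficient functions of $Q_Y$; equivalently, it is the zero set of the single analytic function $Y \mapsto \int_0^1 |Q_Y(t)|^2\,\dd t$ (or the sum of squared moduli of the coefficients), and this function is not identically zero on the component — indeed, already the top frequency $V(Y)$ itself has coefficient polynomial equal to (a signed product of) $(\ii z - 4\pi\al)$ times inverse distances, which is nontrivial for generic $Y$, so $Q_Y \not\equiv 0$ somewhere. Patching the finitely many pieces together (there are finitely many subsets of $S_N$ that can be the argmax set) exhibits $\AAA_0$ as a subset of a finite union of proper analytic subsets of $\AAA$, hence of a single proper analytic subset.

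The main obstacle I anticipate is precisely the bookkeeping needed to justify $\Ad(H_Y) = B_{\#B}/\pi$ together with the claim that the largest possible frequency in the Leibniz expansion is $V(Y)$ and is generically \emph{not} cancelled — i.e.\ establishing that the function carving out $\AAA_0$ is genuinely $\not\equiv 0$. This is where the deterministic input from \cite{LL17,AK20} must be cited carefully: one needs that the counting function of the zeros of an exponential polynomial with frequencies $0 = B_1 < \dots < B_{\#B}$ satisfies $\N(R) = \pi^{-1} B_{\#B} R + O(1)$, and that the "$\le V(Y)$" bound of \cite{LL17} is saturated off a nowhere-dense set — the subtlety being that the nowhere-dense statement alone does not give measure zero, which is exactly why (ii) requires the analyticity upgrade of (i). Once (i) is in hand, (ii) is routine: a nontrivial real-analytic function on a connected open set vanishes on a set of measure zero, and one sums over the finitely many connected components of $\AAA$.
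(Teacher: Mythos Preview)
Your proposal is essentially correct and, once the redundant layer is stripped away, coincides with the paper's proof --- but you have obscured this by adding machinery that the cited input from \cite{AK20} makes unnecessary. The paper argues as follows: it recalls the edge-equivalence classes of permutations from \cite{AK20} (two permutations are edge-equivalent iff $V_\si(Y)=V_{\si'}(Y)$ for \emph{all} $Y\in\AAA$), picks representatives $\wt\si_1,\dots,\wt\si_{\wt n}$, and sets $\wt\AAA_0=\bigcup_{j<m}\{Y:V_{\wt\si_j}(Y)=V_{\wt\si_m}(Y)\}$, which is precisely your ``coincidence set''. The product $\prod_{j<m}(V_{\wt\si_j}-V_{\wt\si_m})$ is real-analytic and not identically zero on $\AAA$ (by definition of the equivalence classes), so $\wt\AAA_0$ is a proper analytic subset. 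The key input from \cite{AK20} is then that for \emph{every} $Y\in\AAA\setminus\wt\AAA_0$ the top-frequency exp-monomials do not cancel, i.e.\ Weyl-type asymptotics holds; hence $\AAA_0\subset\wt\AAA_0$, and both (i) and (ii) follow at once.

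Your stratify-by-argmax step and the analysis of $Q_Y$ on each stratum are therefore superfluous: the very result you intend to cite from \cite{AK20} already gives $Q_Y\not\equiv 0$ for \emph{all} $Y$ off the coincidence set, so the zero locus of $Q_Y$ contributes nothing beyond $\wt\AAA_0$. Two small slips to fix in any case: (a) ``fix a component of $\AAA$'' is not the same as fixing the argmax set --- $\AAA$ is connected (the removed diagonals have codimension $3$ in $\RR^{3N}$), and the argmax is only locally constant on $\AAA\setminus\wt\AAA_0$; (b) for pairs $\si\ne\si'$ that \emph{are} edge-equivalent, $V_\si-V_{\si'}$ vanishes identically, so your coincidence set must be defined only over pairs of inequivalent classes, exactly as the paper does.
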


\begin{rem}
Lojasiewicz's theory about the structure of analytic varieties \cite{L91} yields much stronger restrictions (than those of Theorem \ref{t:Admeas} (ii)) on  the proper analytic subset containing the non-Weyl-type set $\AAA_0$. We refer to \cite{KP02} for the discussion of the theory of real analytic sets.
\end{rem}

\subsection{Proofs of Theorem \ref{t:Ad=as} and \ref{t:Admeas}\label{ss:proofAd}}

The equivalence classes of edge-equivalent permutations 
$\si \in S_N$ for the $N$-tuple $Y$ were introduced in \cite{AK20}. This definition was given in terms of directed and undirected graphs associated with permutations. We refer to \cite[Section 3]{AK20} for the details and would like to notice here that the following fact was also proved: two permutations $\si, \si' \in S_N$ are  edge-equivalent if and only if $V_\si (Y) = V_{\si'} (Y)$ for every $Y \in \AAA$. We recall that $V_\si$ is defined in (\ref{e:al si}) 
(it is the the metric length of the aforementioned directed metric graph associated with $\si$ and $Y$).

Let us denote by $\wt n \in \NN$ the number of 
edge-equivalence classes in $S_N$ and let us take one representative $\wt \si_j$, $j=1,\dots, \wt n$, in each of them.
We use also the following observation of \cite{AK20}:
if $Y$ belongs to the set
\[
\AAA_1 := \{ Y \in \AAA \ : \ \V_{\wt \si_j} (Y) \neq \V_{\wt \si_m} (Y) \text{  if } j \neq m \} ,
\]
then there is no cancellation of the exp-monomials (\ref{e:Dterms}) with the lowest possible frequency $(-V(Y))$ after the summation of (\ref{e:Dterms}) required by the Leibniz formula for $\det \Ga_{Y} $. Thus, for every $Y \in \AAA_1$ the Weyl-type asymptotics takes place.

\begin{proof}[Proof of Theorem \ref{t:Admeas}]
For each permutation $\si$, the function $\V_\si (\cdot)$ is a sum of terms of the form
\[
|Y_j  - Y_{j'} |= \left(\sum_{m=1}^3 [Y_{j,m} - Y_{j',m} ]^2 \right)^{1/2} ,
\]
where $j' = \si (j)$ and where $Y_{j,m}$, $m=1,2,3$,
are the $\RR^3$-coordinates of $Y_j$, $1 \le j \le N$.
Therefore, $\V_\si (\cdot)$ is a real analytic function in the variables $Y_{j,m}$ ($1 \le j \le N$, $m=1,2,3$) on $\AAA$.

Let us take now the representatives $\wt \si_j$, $j=1,\dots, \wt n$, of edge-equivalent classes of permutations, which were described above. We see that the function $f_{j,m} (Y) = \V_{\wt \si_j} (Y) - \V_{\wt \si_m} (Y) $ is real analytic function on $\AAA$. Moreover, if $j \neq m$ this function is not trivial on $\AAA$, and so the set $\AAA_0^{j,m} := \{ Y \in \AAA : f_{j,m} (Y) = 0 \}$ of its zeroes is a proper analytic subset of $\AAA$. 

We will use the following well-known fact (see e.g. \cite{KP02}): 
\begin{gather} \label{e:meas=0}
 \text{a proper analytic subset of an open set in $\RR^d$ has measure zero.}
\end{gather}
Thus, each of the sets  $\AAA_0^{j,m}$ with $j \neq m$ is of measure zero and so is 
their union $\wt \AAA_0 = \bigcup\limits_{1 \le j < m \le \wt n } \AAA_0^{j,m}$. This union 
is obviously also a proper analytic subset of $\AAA$ (it is the set of zeros of the  function $f (Y) = \prod\limits_{1 \le j < m \le \, \wt n } f_{j,m}$).

As it was mentioned above, the results of \cite{AK20} imply that, if $Y \in \AAA \setminus \wt \AAA_0$, the Weyl-type asymptotics takes place. Summarizing, we see that $\AAA_0\subset \wt \AAA_0$, and that $\AAA_0$ is a proper analytic subset of $\AAA$ and it has measure zero. This completes the proof.
\end{proof} 

\begin{proof}[Proof of Theorem \ref{t:Ad=as}]
Assume that $\Up$ is a binomial process of the sample size $m \ge 1$ with the  sampling distribution uniform in $\BB_r$, i.e., 
$\Up \in \Theta (m,\BB_r)$. That is,  $\Up = \{ \Up_j \}_{j=1}^m$, where $\RR^3$-valued random variables $\Up_j$ are i.i.d. and uniformly distributed in $\BB_r$. We can consider $\Up$ as a random vector in $(\RR^3)^m$. The distribution of $\Up$ has the density with respect to Lebesgue's measure over $\RR^{3N}$. Integrating this density over the  measure zero set $\AAA_0$ (w.r.t. the Lebesgue measue), we see from Theorem \ref{t:Admeas} that the probability of the event $\{ \om : \Ad (\Up) < V (\Up)/\pi\}$ equals 0. This completes the proof.
\end{proof}

\section{Point process describing the asymptotics of random resonances\label{s:AsPp}}

The goal of this section is to show that the structure of the set of  random resonances of $H_\Up$ near $\infty$ can be described by a point process on $\RR_+$.

Consider first 
\begin{gather} \label{e:Y0}
\text{a deterministic collection $Y\subset \RR^3$ such that $Y$ is simple and $ 2 \le \# Y < \infty$.}
\end{gather}
Then the multiset of resonances $\Si (H_Y)$ has the global structure of a finite number of  sequences going to $\infty$ with prescribed asymptotics \cite{AK19}.
Namely, there exists a   sequence $\{\Kp_j (Y) \}_{j=1}^{n_1 (Y)} $ of $n_1 (Y) \in \NN$ positive numbers such that 
the multiset $\Si (H_Y)$ is essentially the union $\bigcup_{j=1}^{N} \{k_{j,m}\}_{m \in \ZZ }$ of the sequences satisfying 
\begin{gather} \label{e:kjm}
k_{j,m} = 
  \pi \Kp_j (Y) (2 m +1) - \ii \Kp_j (Y) \Ln |\pi \Kp_j (Y) (2 m +1) | +  O (1)  \text{ as $|m| \to \infty$},
\end{gather}
$j=1, \dots, n_1 (Y)$.
(In \cite{AK19} a more precise asymptotic formula is given, but we do not need it in the present paper.)
`Essentially' in this context means that  one multiset can be obtained from the other by possible addition or exclusion of a finite number of elements.

The collection $\Kp (Y)=\{\Kp_j (Y) \}_{j=1}^{n_1 (Y)}$ of the leading parameters of the asymptotic sequences (\ref{e:kjm}) can be considered as a multiset and 
we assume that $\Kp_j$ are ordered such that 
\[
\Kp_1 (Y) \le \Kp_2 (Y) \le \dots \le \Kp_{n_1 (Y)} (Y) .
\]
Note that some of its elements are actually multiple. Namely, the following 
facts were proved in \cite{AK19} under condition (\ref{e:Y0}):
\begin{align} \label{e:2<mK<mY} 
& 2 \le \ n_1 (Y) \ = \ \# \Kp (Y) \ \le \ \# Y ,
\\ 
& \Kp_1 (Y) = \Kp_2 (Y) = 1/\diam Y \label{e:K1diam}
\end{align}
(the latter means that the multiplicity $\mult (\Kp_1 (Y)) $ of the minimal parameter  $\Kp_1 (Y)$ is at least $2$).

Since $\# \Si (H_Y) \le 1$ if $\# Y \le 1$, it is logical to put 
\[
\text{$\Kp (Y) = \emptyset$ in the cases $\# Y =0$ and $\# Y =1$.}
\]

Then the random multiset $\Kp (\Up)$ of the parameters of the asymptotic sequences is a.s. defined. Let us show its mesurability with respect to the
probabilistic $\sigma$-algebra $\F$ of the underlying probability space.

\begin{thm} \label{t:Kpp}
Assume (A0). Then
$\Kp (\Up)$ is a finite proper point process on $\RR_+$.
\end{thm}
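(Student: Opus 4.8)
The plan is to reduce the statement to the measurability of finitely many functionals of the underlying point process, exploiting the fact that $\Up$ is proper and finite, and that the parameters $\Kp_j(Y)$ are determined by the frequencies of the modified characteristic determinant $D_Y$. First I would decompose $\Om$ (up to a null set) into the measurable events $\Om_k := \{\om : \#\Up(\om) = k\}$, $k \in \NN_0$. On $\Om_0 \cup \Om_1$ we have $\Kp(\Up) = \emptyset$ by definition, so there is nothing to check. On each $\Om_k$ with $k \ge 2$ the process $\Up$ is, by properness, a.s. representable as $\{\Up_j\}_{j=1}^k$ with $\RR^3$-valued random variables $\Up_j$ that are $\F$-measurable, and the event $\{\Up(\om) \text{ simple}\}$ has full probability, so we may also intersect with the measurable set where the $\Up_j$ are pairwise distinct, i.e. where $(\Up_1,\dots,\Up_k) \in \AAA$ (with $N=k$). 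It then suffices to show that on this set the multiset $\Kp(\Up)$ is a measurable function of the random vector $(\Up_1,\dots,\Up_k) \in (\RR^3)^k$.

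The key point is that the collection of leading parameters $\Kp(Y) = \{\Kp_j(Y)\}_{j=1}^{n_1(Y)}$ is, by the asymptotic description (\ref{e:kjm})--(\ref{e:K1diam}) taken from \cite{AK19}, exactly the set of reciprocals $1/B$ of the \emph{positive} frequencies $B$ appearing in the canonical form (\ref{e:CanForm}) of the exponential polynomial $D_Y$, counted with the appropriate multiplicity (the degree-plus-one of the associated polynomial $P_B$, or whatever multiplicity convention \cite{AK19} attaches to each asymptotic string). So I would argue that $Y \mapsto \Kp(Y)$, as a map into the space of finite multisets on $\RR_+$, is Borel measurable on $\AAA$. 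This follows because the frequencies $V_\si(Y) = \sum_j |Y_j - Y_{\si(j)}|$ and the coefficients of the polynomials $p^{\si,Y}$ in (\ref{e:Dterms})--(\ref{e:al si}) are continuous (indeed real-analytic) in $Y \in \AAA$; collecting the finitely many exp-monomials by equal frequency and discarding those that cancel is a Borel operation on the (finitely many) continuous data $(V_\si(Y), p^{\si,Y})_{\si \in S_N}$; and reading off which of the resulting canonical frequencies are strictly positive, together with the degrees of their polynomial coefficients, is likewise Borel. Composing with the measurable map $\om \mapsto (\Up_1(\om),\dots,\Up_k(\om))$ gives the $\F$-measurability of $\Kp(\Up)$ restricted to $\Om_k$, and summing over $k$ yields that $\Kp(\Up)$ is a.s. a well-defined random finite multiset on $\RR_+$; the finiteness $\#\Kp(\Up) = n_1(\Up) \le \#\Up < \infty$ a.s. and properness follow at once from (\ref{e:2<mK<mY}) and (A0).

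The main obstacle I anticipate is the bookkeeping around \emph{cancellation}: the frequency set of $D_Y$ in canonical form is not simply $\{V_\si(Y) : \si \in S_N\}$, because after the Leibniz summation several exp-monomials with a common frequency may add up to zero (or to a polynomial of lower degree), and whether this happens depends on $Y$. One has to check that the set of $Y$ on which a given frequency survives, and the degree of the surviving polynomial there, are Borel sets / Borel functions — which they are, being defined by finitely many analytic equalities and inequalities among the coefficients — but this requires stating carefully, using the edge-equivalence-class apparatus of \cite{AK20} recalled in Section \ref{ss:proofAd}, that only finitely many candidate frequency values occur and that for each the cancellation condition is a finite analytic condition. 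Once this is in place, measurability is automatic; a clean way to phrase it is to note that the map sending $Y \in \AAA$ to the canonical-form data of $D_Y$ is Borel because $D_Y$ depends continuously on $Y$ in, say, the topology of uniform convergence on compacts together with uniform control of the (finitely many) possible frequencies, and the canonical form is a Borel function of an exponential polynomial with frequencies confined to a fixed finite set. I would also remark, in passing, that the statement of Theorem \ref{t:Kpp} only claims measurability and finiteness, so no quantitative or distributional input is needed here — the distributional analysis is deferred to Sections \ref{s:limits}.
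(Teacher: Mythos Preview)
Your approach is sound and would work, but it differs substantially from the paper's proof. You argue directly that the map $Y \mapsto \Kp(Y)$ is Borel on each admissible stratum $\AAA \subset (\RR^3)^k$ by tracing the parameters $\Kp_j(Y)$ back to the canonical-form data of the exponential polynomial $D_Y$, i.e.\ to the frequencies $V_\si(Y)$ and the polynomial coefficients, which are continuous in $Y$; the cancellation bookkeeping you flag is then a finite collection of analytic equalities and the degrees are Borel functions. (One caveat: the precise identification of $\Kp_j(Y)$ with $1/B$ is not quite right---in \cite{AK19} the $\Kp_j$ arise from the \emph{distribution diagram} of $D_Y$, so they involve differences of consecutive canonical frequencies rather than the frequencies themselves---but this does not affect your measurability argument, since the diagram is still a Borel function of the canonical-form data.)

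The paper takes a different, more analytic route that bypasses the internal structure of $D_Y$ entirely. It bootstraps from Theorem~\ref{t:pp}: since $\Si(H_\Up)$ is already known to be a point process, the paper builds for each $f \in C_{\comp}(\RR)$ an explicit functional
\[
J_{Y,R}(f) \;=\; \frac{\pi}{R} \int_{\ii\CC_+ \cap \DD_R} f\!\left(\frac{-\im z}{\Ln(\re z + 1)}\right)\,\frac{-\im z}{\Ln(\re z + 1)}\,\dd\eta_{\Si(H_Y)} ,
\]
which is automatically a random variable in $\om$ for each $R$, and then uses the asymptotic formula \eqref{e:kjm} to identify $\lim_{R\to\infty} J_{\Up,R}(f)$ with $\int f\,\dd\eta_{\Kp(\Up)}$. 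This buys the paper freedom from any explicit description of $\Kp(Y)$ in terms of frequencies and from the cancellation analysis you anticipate; your approach, on the other hand, is more elementary and self-contained (it does not need Theorem~\ref{t:pp} as input), at the cost of that bookkeeping.
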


\begin{proof}
It follows from definition of $K_j (Y)$ that the random set $K(\Up)$ lies in $\RR_+$ whenever it is defined and is nonempty.
Since by (\ref{e:2<mK<mY}) we have $ \# K (\Up) \le \# Y < \infty$ a.s., it is enough to prove that 
$\int_\RR f  \dd \eta_{_{\Kp (\Up)}}$ is a random variable for every $f (\cdot)$ from the space $C_{\comp} (\RR)$ of $\RR$-valued compactly supported continuous functions on $\RR$ (recall that 
$\eta_{_{\Kp (\Up (\om))}}$ is the counting measure for $\Kp (\Up (\om))$).

For $R>0$ and a finite simple deterministic $Y$, consider the functional 
\[
J_{Y,R} (f) = \frac{\pi}{R} \int_{\ii \CC_+ \cap \DD_R } f \left(  \frac{- \im z}{\Ln (\re z+1)}\right) \  \frac{-\im z}{\Ln (\re z+1)} \ \dd \eta_{_{\Si (H_Y)}}
\]
defined on $C_{\comp} (\RR)$.
By Theorem \ref{t:pp}, $\Si (H_\Up)$ is a point process, and so 
the stochastic version $J_{\Up,R} (f)$ of the above functional is an $\RR$-valued random variable for each function $f \in C_{\comp} (\RR)$. Consequently,
$\lim_{R \to +\infty} J_{\Up,R} (f)$ is an $[-\infty,+\infty]$-valued random variable.

If $Y$ satisfies (\ref{e:Y0}), the asymptotic formulae  
(\ref{e:kjm}) imply that 
\[
\lim_{R \to +\infty} J_{Y,R} (f) = \int_{\RR} f \dd \eta_{_{K (Y)}} = \int_{\RR_+} f \dd \eta_{_{K (Y)}} \in \RR.
\]
If $\# Y \le 1$, the above limit equals $0$.
Thus, $\int_{\RR} f \dd \eta_{K (\Up)}$ is an $\RR$-valued random variable for every $f \in C_{\comp} (\RR)$. This completes the proof.
\end{proof}


\begin{cor} \label{c:Kj} Assume that $\Up$ satisfies (A0). Then:
\item[(i)] There exist an $\NN_0$-valued random number $n_1 = n_1 (\Up)$ and 
a sequence of $(-\infty,+\infty] $-valued random variables $\K_j$, $j \in \NN$, such that a.s.
$\Kp (\Up) = \{ \K_j \}_{j=1}^{n_1 (\Up)}$ and $\K_j \le \K_{j+1}$, $j \in \NN$.
\item[(ii)] The random variables $\K_1$ and $\K_2$ can be chosen in (i) so that 
$\K_1 = \K_2 = 1/\diam \Up$ a.s.
(here the convention about the value $1/0$ is not important, but it  can be put for simplicity to be $+ \infty$).
\end{cor}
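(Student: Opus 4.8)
The plan is to deduce the corollary from Theorem~\ref{t:Kpp} (which makes $\Kp(\Up)$ a finite proper point process on $\RR_+$), the deterministic structural facts (\ref{e:2<mK<mY})--(\ref{e:K1diam}), and the trichotomy of Theorem~\ref{t:pp}(ii).

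For part (i), I would start from Theorem~\ref{t:Kpp}: being a finite proper point process on $\RR_+$, $\Kp(\Up)$ admits a.s.\ a representation $\Kp(\Up)=\{\widehat\K_j\}_{j=1}^{n_1(\Up)}$, where $n_1(\Up)$ is an $\NN_0$-valued random variable and the $\widehat\K_j$, $j\in\NN$, are $\RR_+$-valued random variables; moreover $n_1(\Up)\le\#\Up<\infty$ a.s.\ by (\ref{e:2<mK<mY}). It remains to pass to a nondecreasing enumeration measurably. On the $\F$-event $\{n_1(\Up)=n\}$, for each fixed $n\in\NN$, I let $\K_1\le\dots\le\K_n$ be the order statistics of $\widehat\K_1,\dots,\widehat\K_n$ (continuous, hence Borel, functions of their arguments) and set $\K_j:=+\infty$ for $j>n$; on $\{n_1(\Up)=0\}$ I set $\K_j:=+\infty$ for all $j$. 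Patching over $n$ gives $(-\infty,+\infty]$-valued random variables $\K_j$ with $\K_j\le\K_{j+1}$ for all $j$ and with $\{\K_j\}_{j=1}^{n_1(\Up)}=\Kp(\Up)$ a.s.; since the sorted sequence of a finite multiset is unique, $\K_j$ coincides for $j\le n_1(\Up)$ with the ordered deterministic parameter $\Kp_j(\Up)$. This proves (i).

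For part (ii), I would argue on the probability-one event on which $\Up$ is finite and simple, splitting it by $\#\Up$ as in Theorem~\ref{t:pp}(ii). On $\{\#\Up\ge 2\}$ the operator $H_{\Up}$ and the collection $\Kp(\Up)$ are defined and (\ref{e:2<mK<mY})--(\ref{e:K1diam}) apply, so $n_1(\Up)\ge 2$ and $\Kp_1(\Up)=\Kp_2(\Up)=1/\diam\Up$, whence $\K_1=\K_2=1/\diam\Up$ there. On $\{\#\Up\le 1\}$ one has $\Kp(\Up)=\emptyset$, so $n_1(\Up)=0$ and, by the construction in (i), $\K_1=\K_2=+\infty$; since $\diam\Up=0$ on this event (the diameter of the empty set and of any one-point set being $0$), the convention $1/0:=+\infty$ makes $\K_1=\K_2=1/\diam\Up$ hold here as well. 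As $\diam\Up=\max_{1\le i,j\le\nu}|\Up_i-\Up_j|$ is an $\F$-measurable $[0,\infty)$-valued random variable, this gives $\K_1=\K_2=1/\diam\Up$ almost surely.

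The only point requiring care is the measurable reordering in (i) in the presence of the random number of points $n_1(\Up)$; decomposing over the $\F$-events $\{n_1(\Up)=n\}$ and using Borel measurability of finite order statistics disposes of it, and everything else is routine bookkeeping of the deterministic results together with the degenerate cases $\#\Up\in\{0,1\}$.
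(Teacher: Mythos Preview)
Your proof is correct and follows essentially the same approach as the paper's: part (i) is deduced from Theorem~\ref{t:Kpp} (properness of the finite point process $\Kp(\Up)$), and part (ii) from the deterministic identity (\ref{e:K1diam}) together with the convention for $\#\Up\le 1$. The paper's own proof is the one-line version of exactly this argument, so your write-up simply fills in the measurability and case-splitting details that the paper leaves implicit.
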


\begin{proof} (i) follows from Theorem \ref{t:Kpp}. (ii) follow from (\ref{e:K1diam}).
\end{proof}

In what follows it is assumed that for every $\Up$, we fix a certain sequence $\K_j$, $j \in \NN$, of random variables satisfying Corollary \ref{c:Kj} (i)-(ii) and denote it by $\K_j (\Up)$.
(The reason for this convention is that we would like to deal with the maximal and the minimal of the parameters $\{ K_j (\Up) \}_{j=1}^{\# K (\Up)}$ and redefine them avoiding the situation when they do not exist on the event $\{ \om : \# \Up < 2 \}$ of possibly positive probability.)

We pay special attention to the minimal $\Kp_1 (\Up)$ of the parameters $\Kp_j (\Up)$ because $\Kp_1 (\Up)$ corresponds to one or several sequences having asymptotically the smallest possible values for the resonance's minus imaginary part $|\im k|$. The parameter $|\im k|$ can be interpreted as the exponential decay rate of monochromatic oscillations in the settings of the acoustic-type wave equation (see e.g. \cite{CZ95,DZ19}). In the context of the  Schrödinger equation, $\Ga (k) = 4 |\im (k) \re (k) |$ is called the width of the resonance \cite{RSIV}. Resonances $k$ with small values of $|\im k|$ are considered to be `narrow resonances' and are more visible in physical  scattering experiments, see e.g.  \cite{E84,RSIV}.

\section{Limits of random asymptotic structures under growing intensity\label{s:limits}}

As it is shown in Section \ref{s:AsPp}, the asymptocial behaviour of  random resonances at $\infty$ is described by the  finite point process 
$K (\Up) = \{ K_j (\Up)\}_{j=1}^{\# K(\Up)}$ on $\RR$.
This naturally poses a question about the asymptotics 
of the random counting measures $\eta_{_{K(\Up^{[m]})}}$ for a reasonably chosen  sequence of point processes $\Up^{[m]}$.
It makes sense to assume that with the growth of $m \to + \infty$ either the `intensity' or the support of $\Up^{[m]}$ grows unboundedly. 

As a simple example of such a reasonable sequence of point processes, one can take uniform binomial processes $ \Up^{[m]} \in  \Theta (m,\BB_{r})$ in a fixed 3-D ball $\BB_r$ with the total intensity $m$ going to $+\infty$. That is for each $m$ there exists a sequence $\{\xi^{[m]}_{j}\}_{j=1}^{m}$ of independent random variables with uniform distribution in the unit ball $\BB_r$ such that $\Up^{[m]} = \{ \xi^{[m]}_j \}_{j=1}^{m}$.
 Formally, the elements $\xi_j^{[m]}$ of the sequence 
 $\Up^{[m]}$ depend on $m$. However, this dependence can be often neglected. There exists an infinite sequence $\{\xi_{j}\}_{j=1}^{\infty}$ of uniformly distributed in $\BB_r$ i.i.d. random variables such that each of the point processes 
 \begin{gather}
 \wt \Up^{[m]} = \{ \xi_j \}_{j=1}^{m}, \qquad m \in \NN,
 \label{e:tUp}
 \end{gather}
 has the same distribution as $\Up^{[m]}$. For all the purposes of the present paper, $\{ \Up^{[m]} \}_{1}^{\infty}$ can be replaced by $\{\wt \Up^{[m]} \}_{1}^{\infty}$ (the only reason for this replacement is to simplify the notation).

The first questions in connection with the limiting behavior of 
$K (\Up^{[m]})$ concern the limits of the random variables 
\begin{multline} \label{e:Kminmax}
n_{1}^{[m]} := n_1 (\Up^{[m]}) = \#  K (\Up^{[m]}), \quad \K_{\min}^{[m]} := \K_1 (\Up^{[m]}) , 
\text{ and }
\qquad \K_{\max}^{[m]} := \K_{n_{1}^{[m]}}^{[m]} (\Up^{[m]}) .
\end{multline}

Another interesting limiting behavior question concerns the total asymptotic densities $\Ad (H_{\Up^{[m]}})$ (cf. the introduction to \cite{S14}). 
Recall that $V (Y) := \max_{\si \in S_{\# Y}} \sum_{j=1}^{\#Y} | Y_{j} -  Y_{\si (j)}| $ is called the size of the set $Y$ (see Section \ref{s:ad}).

\begin{cor} \label{c:Kprep}
Let $m \ge 2$ and let $\Up^{[m]} =  \{ \xi_j \}_{j=1}^m$ be a collection of $m$ independent uniformly distributed in $\BB_r$ for some $r>0$ random points $\xi_j$. 
Then:
\item[(i)] $ \#  K (\Up^{[m]}) = m $ a.s.,
\item[(ii)] $\K_{\min}^{[m]}$ is equal in distribution to the random variable $\displaystyle \frac{1}{ \max\limits_{1 \le i,j\le m} |\xi_j - \xi_i|}$. In particular, $\K_{\min}^{[m]} \ge \frac{1}{2r} $ a.s.
\item[(iii)] \[ \K_{\max}^{[m]} \ge \frac{m}{\max\limits_{\si \in S_m} \sum_{j=1}^{m} | \xi_{j} - \xi_{\si (j)}|} = \frac{m}{\pi \Ad (H_{\Up^{[m]}})} \quad \text{ a.s. }\]
\end{cor}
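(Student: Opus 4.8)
The plan is to verify the three claims in order, leaning on the results already established: Theorem~\ref{t:Admeas} (the non-Weyl set $\AAA_0$ has Lebesgue measure zero), Theorem~\ref{t:Ad=as} (the Weyl-type asymptotics holds a.s.\ for uniform binomial processes), Corollary~\ref{c:Kj}, and the structural facts \eqref{e:2<mK<mY}, \eqref{e:K1diam} about $\Kp(Y)$.

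For part (i): since the $m$ points $\xi_1,\dots,\xi_m$ are i.i.d.\ with a distribution absolutely continuous with respect to Lebesgue measure on $\BB_r$, the random vector $\Up^{[m]}\in(\RR^3)^m$ lands a.s.\ in the admissible set $\AAA$, and moreover a.s.\ in the complement of the measure-zero analytic set $\wt\AAA_0\supset\AAA_0$ constructed in the proof of Theorem~\ref{t:Admeas}. For $Y\in\AAA\setminus\wt\AAA_0$ all the frequencies $\V_{\wt\si_j}(Y)$, $j=1,\dots,\wt n$, of the edge-equivalence representatives are pairwise distinct; I would then argue that distinctness of these $\wt n$ frequencies forces the full system of $n_1(Y)$ asymptotic parameters $\Kp_j(Y)$ to be, in fact, $m$ in number — i.e.\ the inequality $n_1(Y)\le \# Y$ in \eqref{e:2<mK<mY} is saturated. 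The cleanest route is to invoke the description in \cite{AK19,AK20} of how the $\Kp_j$ arise from the frequency gaps of $D_Y$: generically no two exp-monomials in the Leibniz expansion share a frequency beyond what edge-equivalence already forces, so the canonical form of $D_Y$ has exactly $m$ distinct frequencies and hence $n_1(Y)=m$; then the claim $\#\Kp(\Up^{[m]})=m$ a.s.\ follows by integrating over the measure-zero exceptional set, exactly as in the proof of Theorem~\ref{t:Ad=as}.

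For part (ii): by Corollary~\ref{c:Kj}(ii) we have $\K_{\min}^{[m]}=\K_1(\Up^{[m]})=1/\diam\Up^{[m]}$ a.s., and $\diam\Up^{[m]}=\max_{1\le i,j\le m}|\xi_j-\xi_i|$ by the definition of diameter of a finite set; this gives the stated equality in distribution (in fact an a.s.\ identity for the chosen representatives). Since all $\xi_j\in\BB_r$, we have $|\xi_j-\xi_i|<2r$, hence $\diam\Up^{[m]}\le 2r$ and $\K_{\min}^{[m]}\ge 1/(2r)$ a.s. For part (iii): $\K_{\max}^{[m]}=\K_{n_1^{[m]}}(\Up^{[m]})$ is the largest of the parameters, and from the deterministic bound of \cite{LL17}, restated in Section~\ref{s:ad} as $\Ad(H_Y)\le V(Y)/\pi$ together with $n_1(Y)=\#\Kp(Y)\ge 2$, one has $\K_{\max}(Y)\ge \Ad(H_Y)/n_1(Y)$-type estimates; the sharp statement here is obtained by noting that the sum of all asymptotic densities of the individual sequences \eqref{e:kjm} equals $\Ad(H_Y)=V(Y)/\pi$ (using Theorem~\ref{t:Ad=as} for the a.s.\ equality $\Ad(H_{\Up^{[m]}})=V(\Up^{[m]})/\pi$), while each such sequence contributes density $\K_j(Y)$; since there are $n_1^{[m]}=m$ of them by part (i), the largest satisfies $\K_{\max}^{[m]}\ge \Ad(H_{\Up^{[m]}})/m$ — wait, that points the wrong way, so instead one uses that $V(Y)=\sum$ over a single optimal permutation has at most $m$ terms each bounded by $\diam Y$, giving $V(\Up^{[m]})\le m\,\diam\Up^{[m]}$, equivalently $\K_{\max}^{[m]}\ge m/V(\Up^{[m]})=m/(\pi\,\Ad(H_{\Up^{[m]}}))$ a.s., which is exactly the claim.

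The main obstacle is part (i): the passage from ``the $\wt n$ frequencies of edge-equivalence representatives are distinct'' to ``the exponential polynomial $D_Y$ has exactly $m$ distinct frequencies, hence $n_1(Y)=m$''. This requires the combinatorial fact that, outside a further analytic exceptional set, the $V_\si(Y)$ over all $\si\in S_m$ take exactly $\wt n$ distinct values \emph{and} these translate into $m$ (not fewer) asymptotic sequences after accounting for the internal structure of each coefficient polynomial $p^{\si,Y}$; one must be careful that frequencies coinciding by accident (measure zero) are excluded, and that the reduction ``number of asymptotic parameters $=$ number of points'' genuinely holds generically — this is where I would quote \cite{AK19,AK20} most heavily and where the proof in the paper presumably does the real work. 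Parts (ii) and (iii) are then short deductions from Corollary~\ref{c:Kj}, the diameter/size inequalities, and Theorem~\ref{t:Ad=as}.
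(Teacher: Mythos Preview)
Parts (i) and (ii) are handled essentially as in the paper: (i) via Theorem~\ref{t:Ad=as} together with the structural results of \cite{AK19,AK20}, and (ii) directly from Corollary~\ref{c:Kj}(ii). Your discussion of (i) is more tentative than it needs to be, but the strategy is the right one and matches the paper's citation of \cite[Theorem 3.4(iv)]{AK19}.

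Part (iii), however, contains a genuine error. The inequality $V(\Up^{[m]})\le m\,\diam\Up^{[m]}$ is correct, but it is \emph{not} equivalent to $\K_{\max}^{[m]}\ge m/V(\Up^{[m]})$. What it actually yields is $m/V(\Up^{[m]})\ge 1/\diam\Up^{[m]}=\K_{\min}^{[m]}$, which says nothing about $\K_{\max}^{[m]}$ lying above $m/V$. Your first, abandoned, line of attack was much closer to the truth: each asymptotic sequence in \eqref{e:kjm} contributes $1/(\pi\Kp_j)$ (not $\Kp_j$) to the total density $\Ad(H_Y)$, so that $\sum_{j=1}^{n_1}1/\Kp_j=\pi\,\Ad(H_Y)=V(Y)$ a.s.\ by Theorem~\ref{t:Ad=as}. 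With $n_1=m$ from part (i) and $1/\Kp_j\ge 1/\Kp_{\max}$ for every $j$, this gives $V(\Up^{[m]})\ge m/\K_{\max}^{[m]}$, i.e.\ $\K_{\max}^{[m]}\ge m/V(\Up^{[m]})$.

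The paper phrases this same argument geometrically, invoking the convexity of the distribution diagram of $\det\Ga_{\Up^{[m]}}$ from \cite[Theorem 3.4 and Sect.~3.1]{AK19}: the diagram runs from $(0,m)$ to $(V(Y),0)$, the $\Kp_j$ are the (absolute) slopes of its edges counted with multiplicity, and by convexity the steepest edge has slope at least the chord slope $m/V(Y)$. This is equivalent to the summation argument above. Either route closes the gap; the diameter bound you settled on does not.
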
 

\begin{proof}
(i) follows from Theorem \ref{t:Ad=as} and the proof of the statement (iv) of \cite[Theorem 3.4]{AK19}.
Statement (ii) follows from Corollary \ref{c:Kj} (ii). Statement (iii) can be easily obtained from the combination of Theorem \ref{t:Ad=as}  with the convexity of the distribution diagram for the zeros of the characteristic determinant $\det \Ga_{\Up^{[m]}} (\cdot) $
(see \cite[Theorem 3.4 and Sect. 3.1]{AK19}).
\end{proof}

\subsection{Limit law for the `most narrow' asymptotic sequence}

Consider now the limit of the random variables $\K_{\min}^{[m]}$ as
$m \to \infty$.

\begin{thm}
Consider a sequence of uniform binomial processes $ \Up^{[m]} \in  \Theta (m,\BB_{r})$, $m \in \NN$. Then:
\item[(i)] As $m \to +\infty$, we have $\K_{\min}^{[m]} \to \frac{1}{2r}$ in probability.
\item[(ii)] The (rescaled) limit distribution of the  random variable $\K_{\min}^{[m]} - \frac{1}{2r}$ is given by 
\begin{equation}
\PP \left\{   
m^{2/3} \left(\K_{\min}^{[m]} - \frac{1}{2r} \right) \le t 
\right\} \to 1 - \ee^{- 48 r^3 t^3} \text{\quad as $m \to \infty$ }, \quad t>0
\label{e:KminLim}
\end{equation}
(recall that $\K_{\min}^{[m]} - \frac{1}{2r} \ge 0 $ a.s.).
\end{thm}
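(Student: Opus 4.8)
The plan is to reduce everything to the behaviour of the diameter $D_m := \diam \Up^{[m]} = \max_{1 \le i,j \le m} |\xi_i - \xi_j|$ via the identity $\K^{[m]}_{\min} = \K_1(\Up^{[m]}) = 1/D_m$ provided by Corollary~\ref{c:Kj}(ii) (see also Corollary~\ref{c:Kprep}(ii)). Part (i) is then the assertion that $D_m \to 2r$ in probability; since $D_m \le 2r$ always, I would only need to observe that for any $\delta > 0$ two fixed antipodal spherical caps of $\BB_r$ of depth $\delta/2$ have positive volume, so with probability tending to $1$ the sample $\{\xi_j\}_{j=1}^m$ meets both of them, forcing $D_m \ge 2r - \delta$. (Part (i) also follows a posteriori from (ii), since the limit law in (ii) is supported on $[0,\infty)$, so $\K^{[m]}_{\min} - \frac{1}{2r}$ equals $m^{-2/3}$ times a tight sequence and hence tends to $0$ in probability.)

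For (ii), I would first note the exact equivalence $\bigl\{ m^{2/3}(\K^{[m]}_{\min} - \frac{1}{2r}) \le t \bigr\} = \bigl\{ D_m \ge 2r - \varepsilon_m(t) \bigr\}$, where $\varepsilon_m(t) := 2r - \bigl(\frac{1}{2r} + \frac{t}{m^{2/3}}\bigr)^{-1} = \frac{4r^2 t}{m^{2/3}}\bigl(1 + o(1)\bigr) \to 0$. The core of the proof is the geometric estimate that, for two independent points $\xi_1, \xi_2$ uniform in $\BB_r$,
\[
p_\varepsilon := \PP\bigl\{ |\xi_1 - \xi_2| \ge 2r - \varepsilon \bigr\} = \frac{3\varepsilon^3}{2r^3}\bigl(1 + o(1)\bigr) \qquad (\varepsilon \to 0^+).
\]
I would derive this by conditioning on $\xi_1 = x$ with $a := r - |x|$ (the event being empty unless $a \le \varepsilon$) and computing, in cylindrical coordinates around the axis through $x$, that the admissible set $\{ y \in \BB_r : |x-y| \ge 2r - \varepsilon \}$ is a thin lens near the point of $\partial\BB_r$ antipodal to $x$, of axial depth $\sim 2(\varepsilon-a)$ and transverse extent constrained by $\rho^2 \lesssim 2rb$ (where $b$ is the axial coordinate measured from that antipodal point); this gives it volume $2\pi r (\varepsilon-a)^2\bigl(1+o(1)\bigr)$. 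Integrating over the shell $\{ a \le \varepsilon \}$, whose $(r-a)$-sphere has surface area $4\pi(r-a)^2$, and dividing by $(\operatorname{vol}\BB_r)^2 = (\tfrac43\pi r^3)^2$ yields the displayed formula.

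Finally, let $N_m$ count the unordered pairs $\{i,j\}$ with $|\xi_i - \xi_j| \ge 2r - \varepsilon_m(t)$, so that $\{ D_m < 2r - \varepsilon_m(t) \} = \{ N_m = 0 \}$ and, by the previous step and the choice of $\varepsilon_m(t)$, $\lambda_m := \EE N_m = \binom{m}{2} p_{\varepsilon_m(t)} \to \frac{3(4r^2 t)^3}{4r^3} = 48 r^3 t^3 =: \lambda$. It then remains to show $\PP\{N_m = 0\} \to e^{-\lambda}$, which I would get from $N_m \Rightarrow \mathrm{Poisson}(\lambda)$ proved by the method of factorial moments (equivalently by a Chen--Stein bound): the indicators $\mathbbm{1}\{|\xi_i-\xi_j| \ge 2r-\varepsilon_m\}$ attached to vertex-disjoint pairs are independent, and the only remaining `shared-vertex' terms satisfy $\PP\{|\xi_i-\xi_j| \ge 2r-\varepsilon_m,\ |\xi_i-\xi_l| \ge 2r-\varepsilon_m\} = \EE[q_m(\xi_i)^2] \asymp \varepsilon_m^5$, with $q_m(x) = \frac{3(\varepsilon_m - (r-|x|))_+^2}{2r^2}$; although this is of larger order than $p_{\varepsilon_m}^2 \asymp \varepsilon_m^6$, it is offset by there being only $O(m^3)$ such configurations among the $\binom{m}{2}$ pairs, so their total contribution is of order $m^3\varepsilon_m^5 \asymp m^{-1/3} \to 0$, and the same bookkeeping gives $\EE[(N_m)_k] \to \lambda^k$ for each fixed $k$. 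The step I expect to be the main obstacle is making the volume asymptotics for $p_\varepsilon$ — and the companion estimate for the shared-vertex probability — rigorous with error terms controlled uniformly enough that the Poisson limit is undisturbed; once these are in place, the combinatorial bookkeeping is routine.
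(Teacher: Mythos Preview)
Your reduction to the diameter $D_m = \max_{i,j}|\xi_i-\xi_j|$ via Corollary~\ref{c:Kj}(ii)/\ref{c:Kprep}(ii) is exactly what the paper does. From there, however, the paper simply invokes \cite[Theorem~1.1]{MM07} for the limiting distribution of the diameter of a uniform sample in a ball and reads off~\eqref{e:KminLim}; nothing further is proved.

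Your route differs in that you unpack that citation: you derive the tail asymptotic $p_\varepsilon=\PP\{|\xi_1-\xi_2|\ge 2r-\varepsilon\}\sim \tfrac{3\varepsilon^3}{2r^3}$ from the lens--volume computation, and then obtain $\PP\{N_m=0\}\to e^{-\lambda}$ by a factorial-moment/Chen--Stein argument, checking that the only nontrivial dependence (pairs sharing a vertex) contributes $O(m^3\varepsilon_m^5)=O(m^{-1/3})$. All of these steps are correct as stated, including your formula for $q_m(x)$ and the resulting order $\varepsilon_m^5$ for $\EE[q_m(\xi_i)^2]$; this is essentially the mechanism behind the Mayer--Molchanov theorem in dimension $d=3$. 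What you gain is a self-contained argument that makes explicit where the exponent $2/3$ and the constant $48r^3$ come from; what the paper gains is brevity by outsourcing this to \cite{MM07}. Your stated obstacle --- making the $p_\varepsilon$ asymptotic and the shared-vertex bound uniform --- is genuine but minor: both are exact second-order expansions in $\varepsilon$, and the higher factorial moments follow by the same bookkeeping since configurations with any shared vertex pick up at least one extra factor of $\varepsilon_m$ relative to the fully disjoint case.
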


\begin{proof}
(i) is obvious from Corollary \ref{c:Kprep} (ii), i.e., from the fact that $\K_{\min}^{[m]}$ 
is equal in distribution to the random variable $\displaystyle \frac{1}{ \max\limits_{1 \le i,j\le m} |\xi_j - \xi_i|}$.
Combining this fact with the result of \cite[Theorem 1.1]{MM07}
on the limiting distribution of $\max\limits_{1 \le i,j\le m} |\xi_j - \xi_i|$, one obtains
statement (ii).

\end{proof}

\subsection{Estimates on the growth of 
the total asymptotic density\label{ss:LLAd}
}

The study of the limit law as $m \to \infty$ for the maximal leading parameter $\K_{\max}^{[m]}$ is a more difficult problem.
This parameter is connected with the total asymptotic density of resonances $\Ad (H_{\Up^{[m]}})$ and so, due to Theorem \ref{t:Ad=as}, with the size $V (\Up^{[m]})$ of the random set $\Up^{[m]}$. A simple version of this connection is given by the inequality $\displaystyle \K_{\max}^{[m]} \ge \frac{m}{V (\Up^{[m]})}$ (see Corollary \ref{c:Kprep} (iii)). A more precise dependence in the deterministic case can be seen from \cite[formula (3.6)]{AK19}.

The following theorem describes the rate of grow as $m \to \infty$ of  the total asymptotic densities $\Ad (H_{\Up^{[m]}})$ and of the sizes 
$V (\Up^{[m]})$, which according to Theorem \ref{t:Ad=as} are connected by $\Ad (H_{\Up^{[m]}}) = \frac{V (\Up^{[m]})}{\pi}$ a.s..

\begin{thm} \label{t:Vlim}
Let $r>0$ and $ \Up^{[m]} \in  \Theta (m,\BB_{r})$, $m \in \NN$.
Then 
\begin{gather} \label{e:Vgrow}
\liminf_{m\to \infty}\PP \left\{ \frac{V (\Up^{[m]})}{r} >\frac{36}{35}m + \frac{2\sqrt{87}}{35} t\sqrt{m} \right\} \ge
1  - \Phi (t),
\end{gather}
where $\Phi (t) = (2\pi)^{-1/2} \int_{-\infty}^t e^{-s^2/2} \dd s$ (the standard normal distribution function). In particular, the following estimate is valid for the asymptotic density $\Ad (H_{\Up^{[m]}})$ of resonances
\begin{gather} \label{e:Vgrow2}
\lim_{m\to \infty}\PP \left\{ \Ad (H_{\Up^{[m]}})> \frac{mr}{\pi} \right\} \to 1 
\ \text{ as $m \to \infty$}
\end{gather}
\end{thm}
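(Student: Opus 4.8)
The plan is to reduce the statement to a concentration-of-measure estimate for the random variable $V(\Up^{[m]})$ and then invoke Theorem~\ref{t:Ad=as} to pass to the asymptotic density. The key observation is that $V(Y) = \max_{\si \in S_N} \sum_{j=1}^N |Y_j - Y_{\si(j)}|$ is, for an i.i.d.\ sample, a sum-type functional to which a central limit theorem and variance bounds should apply after identifying its leading behaviour. First I would record that $V(\Up^{[m]})$ is bounded below by the value attained at a \emph{fixed} (non-optimal) permutation, for instance a product of transpositions pairing $\xi_1$ with $\xi_2$, $\xi_3$ with $\xi_4$, and so on; this yields $V(\Up^{[m]}) \ge \sum_{i} 2|\xi_{2i-1} - \xi_{2i}|$, a genuine sum of i.i.d.\ terms. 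The expected value of $|\xi - \xi'|$ for two independent uniform points in $\BB_r$ is the classical mean-distance constant $\frac{36}{35}\cdot\frac r2$ (the well-known value $\tfrac{36}{35}r$ for the ball after the factor of $2$ from the transposition pairing is what produces the coefficient $\frac{36}{35}$ in \eqref{e:Vgrow}), and its variance is a finite explicit multiple of $r^2$; the constant $\frac{2\sqrt{87}}{35}$ in \eqref{e:Vgrow} is exactly $\sqrt{\Var}$ of the paired distance in units of $r$.

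The second step is to apply the classical central limit theorem to the i.i.d.\ sum $\sum_{i=1}^{\lfloor m/2\rfloor} 2|\xi_{2i-1}-\xi_{2i}|$: after centering by its mean $\approx \frac{36}{35}\cdot\frac r2 \cdot m$ and normalizing by $\sqrt{\Var}\cdot\sqrt{m}$, it converges in distribution to a standard normal, so the probability that the sum exceeds $\frac{36}{35}\cdot\frac r2 m + \sqrt{\Var}\, t \sqrt m$ tends to $1 - \Phi(t)$. Since $V(\Up^{[m]})$ dominates this sum pathwise, the same lower bound holds for $\PP\{V(\Up^{[m]})/r > \tfrac{36}{35}m + \tfrac{2\sqrt{87}}{35} t\sqrt m\}$ in the liminf sense, which is precisely \eqref{e:Vgrow}. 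One should be slightly careful with the floor $\lfloor m/2\rfloor$ versus $m$ and whether to pad the last point; this only shifts the mean by an $O(1)$-times-$r$ term, absorbed into the liminf, and changes the variance normalization by a factor $\to 1/2$, so the stated constants correspond to grouping the sample optimally into pairs — this bookkeeping is the one genuinely fiddly point and is where I expect the main obstacle to lie, namely matching the precise numerical constants $\tfrac{36}{35}$ and $\tfrac{2\sqrt{87}}{35}$ to the chosen pairing scheme rather than just producing \emph{some} linear-in-$m$ lower bound with Gaussian fluctuations.

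The third and final step is \eqref{e:Vgrow2}. By Theorem~\ref{t:Ad=as}, $\Ad(H_{\Up^{[m]}}) = V(\Up^{[m]})/\pi$ a.s., so the event $\{\Ad(H_{\Up^{[m]}}) > mr/\pi\}$ coincides a.s.\ with $\{V(\Up^{[m]}) > mr\}$. Taking in \eqref{e:Vgrow} any fixed $t$ and noting $\tfrac{36}{35} > 1$, for all $m$ large enough we have $\tfrac{36}{35}m + \tfrac{2\sqrt{87}}{35} t\sqrt m > m$, whence $\liminf_m \PP\{V(\Up^{[m]})/r > m\} \ge 1 - \Phi(t)$ for every $t$; letting $t \to +\infty$ forces this probability to $1$, which is \eqref{e:Vgrow2}. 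No additional ingredient is needed here beyond the already-established Theorem~\ref{t:Ad=as} and the monotonicity of $\Phi$.
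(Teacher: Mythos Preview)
Your approach is exactly the paper's: bound $V(\Up^{[m]})$ from below by the value at the product-of-transpositions permutation, apply the CLT to the resulting i.i.d.\ sum of pairwise distances (with $\EE|\xi-\xi'|=\tfrac{36}{35}r$ and $\Var\bigl(\tfrac{|\xi-\xi'|}{2r}\bigr)=\tfrac{87}{2450}$, giving precisely the constants $\tfrac{36}{35}$ and $\tfrac{2\sqrt{87}}{35}$), and then deduce \eqref{e:Vgrow2} from Theorem~\ref{t:Ad=as} together with $\tfrac{36}{35}>1$. Two small slips to fix: the mean distance in $\BB_r$ is $\tfrac{36}{35}r$, not $\tfrac{36}{35}\cdot\tfrac{r}{2}$; and in your last step you must let $t\to -\infty$ (so that $1-\Phi(t)\to 1$), not $t\to +\infty$.
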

\begin{proof}
For convenience of the notation, we replace  each process $\Up^{[m]}$ by the process $\wt \Up^{[m]} = \{ \xi_j \}_{j=1}^m$ defined in (\ref{e:tUp}). This does not influence the estimates below.

 Let 
\[
\text{$m_* = 2 \lfloor m/2 \rfloor$, i.e., $m_* = m$ if $m$ is even, and to $m_* = m-1$ if $m$ is odd.}
\]
Then, from the definition of $V (\cdot)$, we have
\[
V (\wt \Up^{[m]}) \ge 2 S_{m_*/2}, \quad \text{ where } S_{m} = \sum_{j=1}^{m} |\xi_{2j-1} - \xi_{2j} |.
\]
The $\RR_+$-valued random variables $\la_j := \frac{|\xi_{2j-1} - \xi_{2j}|}{2r}$ are i.i.d. with the first two moments given by
\[
\EE (\la_1)  = 18/35 , \qquad \EE (\la_1^2) = 3/10 \qquad \text{(see \cite{H50} for the general formula).}
\]
Hence, the variance of $\la_j$ is $\Var \la_j =
\left( \frac{\sqrt{87}}{35 \sqrt{2}}
\right)^2 $.
Applying the Central Limit Theorem, we get 
\[
\PP \left\{  \frac{S_m}{2r} -  \frac{18}{35} m \ \le \ t \sqrt{m} \frac{\sqrt{87}}{35 \sqrt{2}} \right\} \to \Phi (t)
\] 
as $m \to \infty$.
This implies (\ref{e:Vgrow}) and, in turn, (\ref{e:Vgrow2}).
\end{proof}

\vspace{1ex}
\noindent
\textbf{Acknowledgements.} 
IK was supported by the VolkswagenStiftung project “Modeling, Analysis, and Approximation 
Theory toward applications in tomography and inverse problems”.
IK is grateful J\"urgen Prestin for the hospitality of the University of L{\"u}beck and to Baris Evren Ugurcan and the Hausdorff Research Institute for Mathematics of the University of Bonn for the possibility to participate in the activities of the trimester program  ``Randomness, PDEs and Nonlinear Fluctuations''.

\end{document}